\documentclass[12pt, reqno]{amsart}

\usepackage{amsthm,amsmath,amssymb}
\DeclareMathAlphabet{\mathpzc}{OT1}{pzc}{m}{it}
\usepackage[margin=1.3in]{geometry}
\usepackage{setspace}
\usepackage[colorlinks=true, citecolor=black, linkcolor=black, urlcolor=black, setpagesize=false, plainpages=false, pdfpagelabels]{hyperref}


\newcommand{\conv}{\operatorname{conv}}
\newcommand{\den}{\operatorname{den}}

\newcommand{\lcm}{\operatorname{lcm}}
\newcommand{\Mm}{\mathpzc{M}}
\newcommand{\Po}{\mathcal{P}}
\newcommand{\psinv}{\dagger} 
\newcommand{\rank}{\operatorname{rk}}
\newcommand{\real}{\mathbb{R}}
\newcommand{\Sk}{\mathbb{S}^k}
\newcommand{\Ski}{\mathbb{S}^{k_i}}
\newcommand{\Skp}{\Sk_+}
\newcommand{\Sm}{\mathbb{S}^m}
\newcommand{\Smp}{\Sm_+}

\newcommand{\T}{\mathrm{T}}
\newcommand{\tr}{\operatorname{tr}}

\newcommand{\vA}{{A}}
\newcommand{\vb}{{b}}
\newcommand{\vB}{{B}}
\newcommand{\vc}{{c}}
\newcommand{\vC}{{C}}
\newcommand{\vD}{{D}}
\newcommand{\ve}{{e}}
\newcommand{\vf}{{f}}
\newcommand{\vI}{{I}}
\newcommand{\vk}{{k}}
\newcommand{\vK}{{K}}
\newcommand{\vM}{{M}}

\newcommand{\vP}{{P}}
\newcommand{\vQ}{{Q}}
\newcommand{\vR}{{R}}
\newcommand{\vs}{{s}}
\newcommand{\vu}{{u}}
\newcommand{\vV}{{V}}
\newcommand{\vW}{{W}}
\newcommand{\vx}{{x}}
\newcommand{\vX}{{X}}
\newcommand{\vY}{{Y}}
\newcommand{\vz}{{z}}
\newcommand{\vZ}{{Z}}

\allowdisplaybreaks[1]

\newtheoremstyle{myexamples}
{3pt}
{3pt}
{}
{}
{\bfseries}
{.}
{.5em}
{{\thmname{#1}\thmnumber{ #2}\thmnote{ (#3)}}}

\theoremstyle{plain}
\newtheorem{theorem}{Theorem}
\newtheorem{lemma}[theorem]{Lemma}
\newtheorem{proposition}[theorem]{Proposition}
\newtheorem{corollary}[theorem]{Corollary}
\theoremstyle{definition}
\newtheorem{definition}[theorem]{Definition}

\theoremstyle{myexamples}
\newtheorem{example}[theorem]{Example}

\title[Optimal designs for rational models]{Optimal designs for rational function regression}

\author{D{\'a}vid Papp}
\address{D{\'a}vid Papp, Northwestern University, Department of Industrial Engineering and Management Sciences, Technological Institute, 2145 Sheridan Rd, C210, Evanston, IL 60208. Email: \href{mailto:dpapp@iems.northwestern.edu}{dpapp@iems.northwestern.edu}}

\keywords{Optimal design, Approximate design, Rational function regression, Semidefinite programming, Linear matrix inequality}


\ifpdf\usepackage[final,expansion=true,protrusion=true]{microtype}\fi 

\begin{document}

\maketitle

\begin{abstract} We consider optimal non-sequential designs for a large class of (linear and nonlinear) regression models involving polynomials and rational functions with heteroscedastic noise also given by a polynomial or rational weight function. The proposed method treats D-, E-, A-, and $\Phi_p$-optimal designs in a unified manner, and generates a polynomial whose zeros are the support points of the optimal approximate design, generalizing a number of previously known results of the same flavor. The method is based on a mathematical optimization model that can incorporate various criteria of optimality and can be solved efficiently by well established numerical optimization methods. In contrast to previous optimization-based methods proposed for similar design problems, it also has theoretical guarantee of its algorithmic efficiency; in fact, the running times of all numerical examples considered in the paper are negligible. The stability of the method is demonstrated in an example involving high degree polynomials.
After discussing linear models, applications for finding locally optimal designs for nonlinear regression models involving rational functions are presented, then extensions to robust regression designs, and trigonometric regression are shown.
As a corollary, an upper bound on the size of the support set of the minimally-supported optimal designs is also found.
The method is of considerable practical importance, with the potential for instance to impact design software development. Further study of the optimality conditions of the main optimization model might also yield new theoretical insights.
\end{abstract}

\section{Introduction}

This paper is concerned with optimal approximate designs for polynomial and rational regression models with heteroscedastic error modeled by a rational weight function. In our focus is the general linear model
\begin{equation}\label{eq:model} y(t) = \sum_{i=1}^m \theta_i f_i(t) + \varepsilon(t), \qquad t \in \mathcal{I}, \end{equation}
where each $f_i$ is a known rational function defined on $\mathcal{I}$, and the error $\varepsilon(t)$ is a normally distributed random variable with mean zero and variance $\sigma^2(t) = 1/\omega(t)$, where the known \emph{weight function} $\omega$ is a rational function whose numerator and denominator are both positive on $\mathcal{I}$. We are interested in experiments designed to help estimate the unknown parameters $\theta_i$. The design space $\mathcal{I}$ is the finite union of closed, bounded intervals in $\real$, also allowing singletons as degenerate intervals. We assume that observations are uncorrelated, and that the $f_i$ are linearly independent.

Our main result is a characterization of the support of the D-, E-, A-, and $\Phi_p$-optimal designs as the optimal solutions of a semidefinite optimization problem. This directly translates to a method to numerically determine the optimal design, using readily available optimization software. The characterization is applicable to every linear model involving polynomials and rational functions with heteroscedastic noise also given by a polynomial or rational weight function. We demonstrate that the method is numerically robust (in the sense that it can handle ill-conditioned problems, such as those involving polynomials of high degree), and has very short running time on problems of practical size.

Optimal designs for Fourier regression models and locally optimal designs for certain nonlinear models can also be found with similar methods.

In many cases the experimenter is interested only in certain linear combinations of the parameter vector $\theta := (\theta_1, \dots, \theta_m)^\T$, which are given by the components of $\vK^\T\theta$ for some $m \times s$ matrix $\vK$. In the presentation of our approach it is convenient to assume that our goal is to estimate the entire parameter vector, that is $\vK=\vI_m$ (the $m \times m$ identity matrix), and that the design space contains enough points to make all parameters estimable. (If $\vK=\vI_m$, the latter assumption means that there is a design whose information matrix is non-singular, see later.)  In \mbox{Section \ref{sec:subsystems}} we show how the proposed method can be generalized to handle problems with general $\vK$.

Much attention has been devoted to optimal designs for special cases of model \eqref{eq:model}. It is well known that when the design space $\mathcal{I}$ is finite, the D-, E-, and A-optimal approximate designs can be found by convex optimization even for arbitrary $f_i$'s, see, for example \cite[Chapter 7]{BV-04}, or a generalization of this approach to multi-response experiments in \cite{AS-09}. However, when $\mathcal{I}$ is an interval, considerable difficulties arise, as the finite support of the optimal design also has to be characterized.

A popular approach in the literature is that a polynomial is sought whose roots are the support points of the optimal design. For instance, as discovered by Guest \cite{Guest-58} and Hoel \cite{Hoel-58}, the D-optimal design for ordinary polynomial regression, when $f_i = t^i$, and $\omega$ is a positive constant, on $\mathcal{I}=[-1,1]$ is the one that assigns uniform weights to each of the zeros of $t\to (1-t^2)\tfrac{d}{dt}L_m(t)$, where $L_m$ is the Legendre polynomial of degree $m$. The number of support points had already been determined in \cite{dlG-54}. Similar characterizations are known for A- and E-optimal designs for polynomial regression, see, for example the classic monographs \cite{Fedorov-72, Pukelsheim-93}. Another common approach is to determine the canonical moments of the optimal design \cite{DR-96, DS-97}. Further optimality criteria for polynomial models, and closed-form characterizations of the optimal designs for linear and quadratic models, are discussed in \cite{Stigler-71}. See also \cite{IS-01} for E-optimal designs for linear models with rational functions $f_i(t) = (t-\alpha_i)^{-1}$ with $\alpha_i \not\in \mathcal{I}$. The Optimum Experimental Design website \cite{design-website} also contains a rather comprehensive list of solved models, along with an impressive, and continuously maintained, list of references.

More recently considerable attention has been paid to polynomial models with missing interactions, also called \emph{incomplete} or \emph{improper} polynomial models. Representative results include \cite{Dette-92}, which gives D-optimal designs when only odd or only even degree terms appear in the model; \cite{HCW-95} and \cite{CH-96}, which consider D- and E-optimal designs (respectively) for polynomial models with zero constant term; \cite{DR-96}, which considers D-optimal designs, also for some multivariate problems, over the unit cube under less restrictive assumptions on the missing terms; and \cite{Fang-02}, which gives D-optimal designs when only the lowest degree terms, up to a fixed degree $m^\prime$, are absent. Note that even the union of these methods does not yield a complete solution to incomplete polynomial models, even for univariate regression with homoscedastic error.

Results in the heteroscedastic case are even more scarce and typically less general. For instance, \cite{IKS-98} is devoted to D-optimal designs for polynomial regression over $[0,b]$ with the weight function $\omega(t)= t/(1+t)$.

The design space $\mathcal{I}$ is almost always a (closed, bounded) interval, which is probably sufficient for most applications. Imhof and Studden \cite{IS-01} also considered some rational models when $\mathcal{I}$ is the union of two disjoint intervals.

Most of the above results are based on the theory of orthogonal polynomials, canonical moments \cite{DS-97}, and Chebyshev systems \cite{KS-66}. They are rather specific in their scope, and generalization of their proofs appears to be difficult. On the other hand, most of them yield numerically very efficient methods for computing numerically optimal designs. The bottleneck in these methods is either polynomial root-finding, which can be carried out in nearly linear time in the degree of the polynomial \cite{Pan-01}, or the reconstruction of a measure on finite support from its canonical moments, which can also be carried out relatively easily \cite{DS-97}. An exception is the method of \cite{Fang-02}, which involves finding the global maximum of a \emph{multivariate} polynomial (even though it is concerned with univariate polynomial regression only). This is an NP-hard problem even in very restricted classes of polynomials, and is known to be very difficult to solve in practice even when the number of variables and the degree are rather small \cite{gloptipoly}.

In the pursuit of more widely applicable methods, some of the attention has turned to the numerical solution of optimization models that characterize optimal designs. Pukelsheim's monograph \cite{Pukelsheim-93} is a comprehensive overview of optimal design problems with an optimization-oriented viewpoint, but it is not concerned with algorithms or numerical computations. Most numerical methods proposed in the literature are variants of the popular \emph{coordinate-exchange method} from \cite{MN-95}, which is a variant of the classic Gauss--Seidel method (also known as coordinate descent method) used in derivative-free optimization. These algorithms maintain a finite working set of support points, and iteratively replace one of the support points by another one from $\mathcal{I}$ if the optimal design on the new support set is better than that of the current support set. See \cite{CL-07} for a recent variant of this idea for finding approximate D-optimal designs.

However, this approach has serious drawbacks, and care has to be taken not to abuse them: (i) some variants require that the size of the minimally supported optimal design be known \emph{a priori}; (ii) no bound is known on the number of iterations the algorithm might take; (iii) in fact, the number of iterations of the coordinate descent method is known to be quite high in practice even for some very simple convex optimization problems \cite[Chapter 9]{NW-00}; and (iv) the coordinate descent method does not necessarily converge at all if the function being optimized is not continuously differentiable \cite{Rusz-05}. Hence, these methods can hardly be considered a completely satisfactory solution of most polynomial regression problems, even though some successful numerical experiments have been reported, cf. \cite{CL-07}.

This paper proposes a different approach to linear regression models involving polynomials and rational functions. Motivated in part by the approach of \cite{BV-04}, it is also based on an optimization model involving linear matrix inequalities, which can be solved efficiently, both in theory and in practice, by readily available optimization software.

The novelty of the proposed method is that it does not work with the support points directly, as existing numerical methods, such as the coordinate-exchange method, do. Instead, it follows some of the previous symbolic approaches by computing the coefficients of a polynomial whose zeros are the support points of the optimal design.

After introducing the problem formally, we derive our main theorems in Section \ref{sec:design}
for the estimation of the full parameter vector $\theta$. Illustrative examples are presented in \mbox{Section \ref{sec:examples}}. \mbox{Section \ref{sec:subsystems}} is concerned with the more general case, when only a subset of the parameters (or their linear combinations) need to be estimated.
We then apply these results to finding locally optimal designs for nonlinear models in \mbox{Section \ref{sec:nonlinearmodels}}. Finally, in \mbox{Section \ref{sec:othersystems}} we give an outlook to models of regression involving other functions than rational functions.

\paragraph{Notation} We will make use of the following, mostly standard, notations: $\deg p$ denotes the degree of the polynomial $p$, $\lcm$ stands for the least common multiple of polynomials. The denominator of a rational function $r$ is denoted by $\den(r)$. The positive part function is denoted by $(\cdot)_+$. The brackets $\langle \cdot, \cdot \rangle$ denote the usual (Frobenius) inner product of vectors and matrices, that is, $\langle \vA, \vB \rangle = \sum_{ij}A_{ij}B_{ij}$. Since many decision variables in the paper are matrices, linear constraints on matrices are written in operator form. For example, a linear equality constraint on an unknown matrix $\vX$ will be written as $A(\vX)=\vb$ (where $A$ is a linear operator and $\vb$ is a vector) to avoid the cumbersome ``vec'' notation necessary to use matrix-vector products. For the linear operator $A$,  $A^*$ denotes its adjoint. The identity operator is written as $\operatorname{id}$.

The space of $m\times m$ symmetric matrices is denoted by $\Sm$, the cone of $m\times m$ positive semidefinite real symmetric matrices is $\Smp$. The \emph{L{\"o}wner partial order} on $\Sm$, denoted by $\succcurlyeq$, is the conic order generated by $\Smp$; in other words, we write $\vA \succcurlyeq \vB$ when $\vA-\vB \in \Smp$.

\section{Optimality criteria and their semidefinite representations}

A \emph{design for infinite sample size} (also called \emph{approximate design} or \emph{design} for short) is a finitely supported probability measure $\xi$ on $\mathcal{I}$. Using the notation $\vf(t) = (f_1(t), \dots, f_m(t))^\T$, the \emph{Fisher information matrix} of $\theta$ corresponding to the design $\xi$ is
\begin{equation}\label{eq:Fisher} \vM(\xi) = \int_{\mathcal{I}} \vf(t) \vf(t)^\T \omega(t) d \xi(t).\end{equation}
Of course, this integral simplifies to a finite sum for every design. Note that for every $\xi$, $\vM(\xi) \in \Smp$. A design $\hat\xi$ is considered \emph{optimal} if $\vM(\hat\xi)$ is maximal with respect to the L{\"o}wner partial order (recall the end of the previous section); see \cite[Chapter 4]{Pukelsheim-93} for detailed statistical interpretation. If $\Phi$ is an $\Smp\to\real$ function, the design $\hat\xi$ is called \emph{optimal with respect to $\Phi$}, or \emph{$\Phi$-optimal} for short, if $\Phi(\vM(\hat\xi))$ is maximum. Again, only those criteria are interesting which are compatible with the L{\"o}wner partial order, that is functions $\Phi$ satisfying $\Phi(\vA) \geq \Phi(\vB)$ whenever $\vA \succcurlyeq \vB \succcurlyeq 0$. Popular choices of $\Phi$ include the following.
\begin{enumerate}
    \item When $\Phi(\vM) = \det(\vM)$, $\hat\xi$ is called \emph{D-optimal}.
    \item When $\Phi(\vM) = \lambda_1(\vM)$, the smallest eigenvalue of $\vM$, $\hat\xi$ is called \emph{E-optimal}.
    \item When $\Phi(\vM) = -\tr(\vM^{-1})$, where $\tr$ denotes matrix trace, $\hat\xi$ is called \emph{A-optimal}.
    \item When $\Phi(\vM) = (\tr(\vM^p))^{1/p}$, $\hat\xi$ is called \emph{$\Phi_p$-optimal}.
\end{enumerate}
For most purposes of the paper $\Phi$ could be an arbitrary concave extended real valued function on $\Smp$ with finite values on the interior of $\Smp$. However, to avoid certain technical difficulties, and in order to obtain good characterizations of optimal designs, we will assume that the $\Phi$ of our choice is \emph{representable by linear matrix inequalities} (LMIs) or \emph{semidefinite representable}, this includes all of the criteria discussed above. The precise definitions we need are summarized next.

\begin{definition}\label{def:SD-representable-set}
A set $S\subseteq\real^n$ is \emph{semidefinite representable} if for some $k\geq 1$ and $l \geq 0$ there exist affine functions $A:\real^n\to\Sk$ and $C:\real^l\to\Sk$ such that the set $S$ can be characterized by a linear matrix inequality in the following way:
\[ S = \{\vs \in \real^n\;|\; \exists\, \vu \in \real^l \colon A(\vs) + C(\vu) \succcurlyeq 0 \}. \]
\end{definition}
Note that the intersection of semidefinite representable sets is also semidefinite representable, so we could equivalently allow to have a characterization of the above form with $p \geq 1$ inequalities. The motivation behind the idea of semidefinite representable sets is that finding global optima of ``nice'' functions over them is easy, and a number of numerical methods are available to that in an efficient manner. ``Nice'' functions include semidefinite representable functions, defined below, in \mbox{Definition \ref{def:SD-representable-func}}.

In this paper we will encounter two important instances of semidefinite representable sets: the coefficient vectors of polynomials that are nonnegative over an interval, and the level sets of the optimality criteria $\Phi$.

\begin{lemma}[\protect{\cite[Chapter 2]{KS-66}}]\label{lem:pospoly-sd-rep}
The set
\[P_d^{[a,b]} = \Big\{ (p_0, \dots, p_d) \colon \sum_{i=0}^d p_i x^i \geq 0\;\;\forall\,x\in[a,b] \Big\} \]
of coefficient vectors of polynomials of degree $d$ that are are nonnegative over the interval $[a,b]$ is a semidefinite representable subset of $\real^{d+1}$.
\end{lemma}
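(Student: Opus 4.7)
The plan is to combine two classical results: the Markov--Lukács theorem on nonnegative univariate polynomials on a bounded interval, and the standard Gram matrix representation of sums of squares as linear images of the positive semidefinite cone.

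First I would invoke the Markov--Lukács decomposition as the substantive content cited from \cite[Chapter 2]{KS-66}: a polynomial of degree $d$ is nonnegative on $[a,b]$ if and only if it admits a decomposition as a nonnegative combination of sums of squares weighted by one of $1$, $(x-a)$, $(b-x)$, or $(x-a)(b-x)$, with specific degree bounds depending on the parity of $d$. Concretely, for $d = 2k$ even, $p(x) = s_0(x) + (x-a)(b-x)\,s_1(x)$ with $\deg s_0 \le 2k$ and $\deg s_1 \le 2k-2$; for $d$ odd, $p(x) = (x-a)\,s_0(x) + (b-x)\,s_1(x)$ with $\deg s_0,\deg s_1 \le d-1$; in each case $s_0, s_1$ are sums of squares.

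Next I would apply the fact that a univariate polynomial $s$ of degree at most $2k$ is a sum of squares if and only if there exists $\vX \in \Skp$ (of size $k+1$) with $s(x) = v(x)^\T \vX\, v(x)$, where $v(x) = (1, x, \dots, x^k)^\T$. Expanding this product, the coefficients of $s$ are \emph{linear} functions of the entries of $\vX$ (each coefficient being the sum of entries of $\vX$ along a fixed anti-diagonal). Consequently, the set of coefficient vectors of SOS polynomials of degree at most $2k$ is already semidefinite representable, being a linear image of the cone $\Skp$.

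Combining these, a coefficient vector $(p_0,\dots,p_d)$ lies in $P_d^{[a,b]}$ if and only if it equals a linear image of a pair of PSD matrices $(\vX_0, \vX_1)$ through the two Gram maps above, composed with multiplication by the Markov--Lukács weight polynomials (themselves represented by fixed coefficient vectors). This is exactly the form of Definition~\ref{def:SD-representable-set}: the auxiliary vector $\vu$ parametrizes the entries of $\vX_0$ and $\vX_1$, and the affine map $A(\vs) + C(\vu)$ encodes simultaneously the positive semidefiniteness of the block-diagonal matrix $\vX_0 \oplus \vX_1$ and the linear equations relating $\vu$ to $\vs$ (the latter absorbed into the LMI by standard slack constructions).

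The only real obstacle is the explicit bookkeeping of the linear map from Gram matrix entries to polynomial coefficients and its packaging into a single LMI; the mathematical substance is carried entirely by the Markov--Lukács theorem.
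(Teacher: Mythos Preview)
Your proposal is correct and follows essentially the same route as the paper: the Appendix invokes the Luk\'acs representation (your Markov--Luk\'acs decomposition, with the same parity-dependent weights $1$, $(t-a)(b-t)$ or $(t-a)$, $(b-t)$) and then applies the Gram-matrix characterization of sums of squares (attributed to Nesterov) to obtain the explicit linear equations \eqref{eq:OddDegreePolyChar}--\eqref{eq:EvenDegreePolyChar} between the polynomial coefficients and the anti-diagonal sums of two positive semidefinite matrices. The only cosmetic difference is that the paper states the decomposition with single squares $r^2$, $q^2$, $s^2$ rather than sums of squares, but these are equivalent in the univariate case and the resulting LMI is the same.
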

This is a reasonably well known theorem in probability and statistics owing to its application in moment problems \cite{DS-97}, for completeness we provide a specific representation in the Appendix. The same assertion holds even if the polynomials are represented in another basis, not in the monomial basis, but the actual characterization will, of course, be different.

The next definition is necessary to define the class of optimality criteria our approach can handle.
\begin{definition}\label{def:SD-representable-func} A function $\Phi:\Smp\to\real$ is \emph{semidefinite representable} if its (closed) upper level sets are semidefinite representable, that is, if for some $k_1, \dots, k_p$ and $l$ there exist linear functions $A_i:\Smp\to\Ski$, $C_i:\real^l\to\Ski$, and matrices $\vB_i \in \Ski$, $\vD_i\in\Ski$ $(i=1, \dots, p)$ such that for all $\vX\in\Smp \text{ and } z\in\real$,
$\Phi(\vX) \geq z$ holds if and only if \begin{equation}\label{eq:SDP-rep_def}A_i(\vX) + \vB_iz + C_i(\vu) + \vD_i \succcurlyeq 0 \quad i = 1, \dots, p\end{equation}
for some $\vu \in \real^l$.
\end{definition}

As mentioned above, finding the optimal value (and the optimizer) of a semidefinite representable function over a semidefinite representable set is generally easy; optimization problems of this form are called \emph{semidefinite optimization problems} or \emph{semidefinite programs}; see also the beginning of the next section.

We will also need the following (technical) assumption on the relationship between the model (as defined by the functions $f_i$ and $\omega$) and the criterion function $\Phi$. It is only used in the proof of the main theorem.

\begin{definition}
We say that the semidefinite representable function $\Phi:\Smp\to\real$ is \emph{admissible} with respect to the set $\mathcal{X} \subseteq \Smp$ if $\Phi$ has a representation \eqref{eq:SDP-rep_def} for which there exists an $\hat{\vX} \in \mathcal{X}$ satisfying \eqref{eq:SDP-rep_def} with strict inequality for some $z$ and $\vu$. That is to say that the left-hand side of each of the $p$ inequalities can be made positive definite simultaneously for at least one $\hat{\vX} \in \mathcal{X}$.
\end{definition}

This is a rather technical condition in the sense that most interesting functions $\Phi$ are admissible with respect to every non-empty set $\mathcal{X}$ (a sufficient condition for this is that in the semidefinite representation of $\Phi$ each $\vB_i$ be positive or negative definite), or at least with respect to every $\mathcal{X}$ that contains a non-singular matrix.

D-, E-, and A-optimality are all semidefinite representable, or are equivalent to other criteria given by semidefinite representable functions. The same holds for $\Phi_p$-optimality. They are also admissible with respect to every set of Fisher information matrices for which the criteria is well-defined (see below). Note that all semidefinite representable functions are quasi-concave, continuous functions.

\begin{example}[E-optimality]\label{ex:E-optimality}
For every $\vM \in \Sm$, $\lambda_1(\vM) \geq z$ if and only if $\vM - z\vI \succcurlyeq 0$, so $\lambda_1$ admits a simple semidefinite representation. In this representation $p=1$, $A_1 = \operatorname{id}$, $\vB_1 = -\vI$, $C_1 \equiv 0$, and $\vD_1 = 0$, hence $\lambda_1$ is admissible with respect to every non-empty set of Fisher information matrices.
\end{example}

\begin{example}[A-optimality]\label{ex:A-optimality}
It follows from Haynsworth's theorem \cite{Haynsworth-68} on the inertia of Hermitian block matrices that a symmetric block matrix $\left(\begin{smallmatrix}\vP&\vQ\\ \vQ^\T&\vR\end{smallmatrix}\right)$ with positive definite block $\vP$ is positive semidefinite if and only if its \emph{Schur complement}, given by $\vR - \vQ^\T \vP^{-1}\vQ$, is positive semidefinite. Let $\vM \in \Smp$ be invertible, for example an invertible Fisher-information matrix, and fix a $k \in \{1, \dots, m\}$. Plugging in $\vM$ for $\vP$, the $k$th unit vector $\ve_k$ for $\vQ^\T$, and a scalar $u$ for $\vR$ we have that $(\vM^{-1})_{k,k} \leq u$ if and only if $\left(\begin{smallmatrix}\vM&\ve_k\\ \ve_k^\T& u\end{smallmatrix}\right) \succcurlyeq 0$. This observation yields a semidefinite representation of A-optimality of the form \eqref{eq:SDP-rep_def} with $p=m+1$:
\[\tr(\vM^{-1}) \leq z \quad \text{iff} \quad \exists\, u_1,\dots,u_m\colon z \geq \sum_i u_i,\; \text{and}\; \left(\begin{smallmatrix}\vM&\ve_k\\ \ve_k^\T& u_k\end{smallmatrix}\right) \succcurlyeq 0, k=1,\dots, m.\]

It follows that the A-optimality criterion is admissible with respect to every set of Fisher information matrices that contains at least one non-singular matrix.
\end{example}

%

\begin{example}[D- and $\Phi_p$-optimality]
The cases of D-optimality and $\Phi_p$-optimality are more complicated, but can also be fitted in the above framework. Owing to page limitations we can only give the flavor of this result, and pointers to the literature.

D-optimality is equivalent to optimality with respect to the criterion $\Phi(\vM) = (\det(\vM))^{1/m}$, where $m$ is the size of $\vM$. Note that this is the geometric mean of the eigenvalues of $\vM$. $\Phi_p$-optimality is expressed by the matrix mean $\Phi_p(\vM) = \bigl(\tr(\vM^p)\bigr)^{1/p} = \bigl(\sum_{i=1}^m\lambda_i^p\bigr)^{1/p}$, where $\lambda_i$ is the $i$th eigenvalue of $\vM$. Hence, both criteria are symmetric functions of the eigenvalues of $\vM$. Moreover, both the geometric mean and the $p$-norm, for every rational $p\geq 1$ are also semidefinite representable \cite[Section 3.3.1]{BTN-01}. Finally, we can invoke \cite[Proposition 4.2.1]{BTN-01}, which states that for every semidefinite representable symmetric $g\colon\real^m\to\real$, the function $\Phi(\vM) = g\bigl(\lambda_1(\vM), \dots, \lambda_m(\vM)\bigl)$ is also semidefinite representable.

D- and $\Phi_p$-optimality are also admissible with respect to every set of Fisher information matrices that contains at least one non-singular matrix.
\end{example}

Another interesting optimality criterion, not considered in this paper, is the maximin efficient criterion. Models for which maximin efficient approximately optimal designs can be found using semidefinite programming (this includes polynomial models) can be found in the recent technical report \cite{FT-10}.

\section{Optimal designs and semidefinite optimization}\label{sec:design}

First we shall give a very short introduction to semidefinite optimization to summarize the background necessary to keep this paper self-contained. The reader is also encouraged to consult \cite{VB-96}; or \cite{WSL-00} for a considerably more in-depth survey to this vast field.

Semidefinite optimization (or semidefinite programming) is a generalization of the familiar linear optimization. A \emph{semidefinite program} (or SDP for short) is the mathematical problem of finding the optimum of a linear function subject to the constraint that an affine combination of matrices is positive semidefinite. In other words, it is an optimization problem of the form
\begin{equation}\label{eq:SDP-def}\begin{aligned}
\mathop{\text{minimize}}_{\vx \in \real^n}\;\; & \sum_i c_i x_i \\
\text{subject to}\;\; & \vA_0 + \sum_{i=1}^n \vA_i x_i \succcurlyeq 0,
\end{aligned}
\end{equation}
where $\vc \in \real^n$ and $\vA_i \in \Sm$, $(i=0, \dots, n)$ are given; $x_i$ denotes the $i$th component of the vector $\vx$; these are the variables.

Constraints of the above form are called \emph{semidefinite constraints} or \emph{linear matrix inequalities}. The format of problem \eqref{eq:SDP-def} is regarded as a ``standard form'', but other, seemingly more general optimization problems that can be converted to the above form are also considered semidefinite programming problems. In particular, multiple semidefinite constraints can be added to the problem, and the constraints can be augmented by linear inequalities and equations, as these translate to constraints on diagonal matrices. Matrices of variables can also be considered, and constrained simultaneously in the form $L(\vX) \succcurlyeq \vC$; here $\vX$ is the matrix of variables, $L$ is a linear operator, and $\vC$ is a matrix of appropriate size. More generally, the maximization of every semidefinite representable function over every semidefinite representable set (as defined in the previous section) can be cast as an SDP. In this paper we will show that finding the support of the optimal design can be cast as an SDP of this more general form, for every regression model \eqref{eq:model}.

Semidefinite programs are special convex optimization problems, and the standard duality theory of convex optimization \cite{rockafellar-70,Rusz-05} applies to them. Algorithms to numerically compute the optimal solutions of a semidefinite program have been well studied for more than two decades; SDPs involving tens of thousands of variables are routinely solved in the literature \cite{VB-96}. The SDPs of this paper are considerably smaller; they can be solved in a fraction of a second without any numerical issues by commonly used SDP solver software, such as SeDuMi \cite{sedumi}, a freely available Matlab toolbox. Additional toolboxes, such as CVX \cite{cvx} and YALMIP \cite{yalmip}, are available to translate ``high-level'' semidefinite programs involving semidefinite functions such as the optimality criteria mentioned in this paper to the semidefinite programs in the above ``standard'' form.

\subsection{Semidefinite representation of optimal designs}

The main result in this section, and of the paper, is that the problem of finding an optimal design with respect to $\Phi$ can be equivalently written as a semidefinite programming problem whenever the functions $f_i, i=1, \dots, m$ and $\omega$ are rational functions defined over a finite union $\mathcal{I}$ of closed intervals, and $\Phi$ is a semidefinite representable function that satisfies the mild technical condition that it is admissible with respect to the set of all Fisher information matrices.

As mentioned in the Introduction, this is already known for design spaces $\mathcal{I}$ consisting of finitely many points, even for arbitrary $\{f_i\}$. While it is not stated there in this general form, the following theorem is implicit in \cite[Chapter 7]{BV-04}:

\begin{theorem}[\protect{\cite[Chapter 7]{BV-04}}] \label{thm:eperfagyi}
Let $\mathcal{I} \subset \real$ be finite, and $\Phi$ be a semidefinite representable function compatible with the L{\"o}wner partial order. Then the  $\Phi$-optimal designs for model \eqref{eq:model} are characterized as the set of optimal solutions to a semidefinite programming problem.
\end{theorem}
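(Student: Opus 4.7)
Since $\mathcal{I}$ is finite, say $\mathcal{I}=\{t_1,\dots,t_N\}$, every approximate design $\xi$ is uniquely identified with a weight vector $\vW=(w_1,\dots,w_N)\in\real^N$ satisfying $\vW\geq 0$ (componentwise) and $\sum_{j=1}^N w_j=1$. The plan is to rewrite the $\Phi$-optimal design problem as an SDP in the variables $\vW$, an epigraph variable $z$, and the auxiliary variable $\vu$ from the semidefinite representation of $\Phi$.

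First I would observe that under this parametrization the Fisher information matrix \eqref{eq:Fisher} collapses to the finite sum
\[ \vM(\xi) \;=\; \sum_{j=1}^N w_j \,\omega(t_j)\, \vf(t_j)\vf(t_j)^\T, \]
which is a linear function $\vM(\vW)$ of the decision vector $\vW\in\real^N$. Since $\omega(t_j)>0$ and the $\vf(t_j)\vf(t_j)^\T$ are fixed PSD matrices, $\vM(\vW)\in\Smp$ for every feasible $\vW$. This linearity is the key ingredient that allows the forthcoming LMIs to remain affine in $\vW$.

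Next I would invoke \mbox{Definition \ref{def:SD-representable-func}}: because $\Phi$ is semidefinite representable, there exist affine operators $A_i, C_i$ and matrices $\vB_i,\vD_i$ such that $\Phi(\vM)\geq z$ if and only if there is a $\vu\in\real^l$ with
\[ A_i(\vM) + \vB_i z + C_i(\vu) + \vD_i \;\succcurlyeq\; 0, \qquad i=1,\dots,p. \]
Substituting $\vM=\vM(\vW)$ and exploiting linearity of $A_i$ and of $\vM(\cdot)$ in $\vW$, each of these constraints is an LMI in the combined variables $(\vW,z,\vu)$. Adding the linear (scalar) constraints $w_j\geq 0$ and $\sum_j w_j=1$ (which are themselves LMIs on a diagonal block), the problem
\[
\begin{aligned}
\mathop{\text{maximize}}_{\vW,z,\vu}\;\; & z\\
\text{subject to}\;\; & A_i(\vM(\vW)) + \vB_i z + C_i(\vu) + \vD_i \succcurlyeq 0,\quad i=1,\dots,p,\\
& \vW \geq 0,\quad \mathbf{1}^\T \vW = 1
\end{aligned}
\]
is an SDP in the generalized sense described at the end of the previous section, and $\vW$ corresponds to a $\Phi$-optimal design precisely when $(\vW,z,\vu)$ attains the optimum.

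The last thing I would verify is the correspondence between optima of this SDP and $\Phi$-optimal designs. Since $\Phi$ is compatible with the L\"owner order, maximizing $\Phi(\vM(\xi))$ is the appropriate notion of $\Phi$-optimality from the previous section, and the epigraph characterization together with monotonicity of $z$ in the objective makes this immediate: the projection of any SDP optimum onto its $\vW$-component is a $\Phi$-optimal weight vector, and conversely every $\Phi$-optimal design extends to an SDP optimum by taking $z=\Phi(\vM(\vW))$ together with the certifying $\vu$. There is no real obstacle here — the only subtle point is cosmetic, namely presenting the weight constraints and the LMIs in a single unified ``standard form'' as in \eqref{eq:SDP-def}, which is handled once and for all by the observation that scalar inequalities are LMIs on $1\times 1$ blocks.
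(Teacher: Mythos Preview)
The paper does not actually prove this theorem; it is stated as a result implicit in \cite[Chapter~7]{BV-04} and used as a black box. Your argument is correct and is precisely the standard one that underlies the cited reference: parametrize designs by weight vectors on the finite support, observe that $\vM(\xi)$ is linear in those weights, and then invoke the epigraph LMI representation of $\Phi$ from \mbox{Definition~\ref{def:SD-representable-func}} together with the linear simplex constraints on the weights. Nothing is missing; the only cosmetic remark is that you might avoid the symbol $\vW$ for the weight vector, since in this paper $\vW_i$ is reserved for the dual matrix variables appearing in \eqref{eq:final}.
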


In this semidefinite programming problem the support points are fixed parameters, and the variables are the masses the optimal design assigns to the support points; hence \mbox{Theorem \ref{thm:eperfagyi}} allows us to find the optimal design only once its support is known. Treating the support points as variables would be problematic for two reasons: the number of support points for the optimal design may not be known, and even if it was, the resulting optimization problem would be intractable. Our goal in this paper is to characterize the support of the optimal design as a solution of a semidefinite program. In the optimization problem we are about to define, the variables are the \emph{coefficients of a polynomial whose roots are the support points of the optimal design}.


Our main result, \mbox{Theorem \ref{thm:main}} below, is the characterization of the support of the optimal design as a solution of a semidefinite program. After finding the support, \mbox{Theorem \ref{thm:eperfagyi}} can be applied to find the weights---by solving another semidefinite program.


\begin{theorem}\label{thm:main}
Suppose that in the linear model \eqref{eq:model} $\mathcal{I}$ is a finite union of closed intervals, the functions $f_i$ are rational functions with finite values on $\mathcal{I}$, and $\omega$ is a nonnegative rational function on $\mathcal{I}$. Let $\Phi$ be an admissible semidefinite representable function (with representation \eqref{eq:SDP-rep_def}) with respect to the set of Fisher information matrices $\Mm = \conv\{ \vf(t)\vf(t)^\T\omega(t)\,|\,t\in \mathcal{I} \}$. Then the support of the $\Phi$-optimal design is a subset of the real zeros of the polynomial $\pi$ obtained by solving the following semidefinite programming problem:
\begin{subequations}\label{eq:final}
\begin{align}
\mathop{\operatorname{minimize}}_{\substack{y\in\real, \pi\in\real^d,\\ \vW_1, \dots, \vW_p \in \Skp}}\;\; & y \\
\operatorname{subject\,to}\;\; & \sum_{i=1}^p \langle \vW_i, \vB_i \rangle = -1, \quad \sum_{i=1}^p C_i^*(\vW_i) = 0,\label{eq:final-BC}\\
                      & \pi = \Pi(y, \vW_1, \dots, \vW_p),\label{eq:PI-constr} \\
                      & \pi \in P_d^\mathcal{I},\label{eq:POP-constr}
\end{align}
\end{subequations}
where $d$ is the degree of the polynomial
\begin{equation}\label{eq:def_pi}
t \to \lcm(\den(\omega), \den(f_1^2), \dots, \den(f_p^2)) \bigg(y - \sum_{i=1}^p \langle \vW_i, A_i(\vM(\xi_t))+\vD_i\rangle\bigg),
\end{equation}
whose coefficient vector is denoted by $\Pi(y, \vW_1, \dots, \vW_p)$ in \eqref{eq:PI-constr} above.
\end{theorem}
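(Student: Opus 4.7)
The plan is to derive the SDP \eqref{eq:final} as the Lagrangian dual of the infinite-dimensional $\Phi$-optimal design problem and then to read off the support characterization from strong duality and complementary slackness.

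First I would rewrite the primal problem of maximizing $\Phi(\vM(\xi))$ over designs $\xi$ on $\mathcal{I}$. Using the LMI representation \eqref{eq:SDP-rep_def} of $\Phi$ and introducing auxiliary variables $z\in\real$ and $\vu\in\real^l$, the primal becomes: maximize $z$ over $(\xi,z,\vu)$ subject to $\int_\mathcal{I} d\xi=1$, $\xi\ge 0$, and
\[ A_i(\vM(\xi)) + \vB_i z + C_i(\vu) + \vD_i \succcurlyeq 0, \quad i=1,\dots,p. \]
Since $\vM(\xi)=\int_\mathcal{I}\vM(\xi_t)\,d\xi(t)$ is linear in $\xi$, this is an infinite-dimensional conic program, and the admissibility hypothesis provides a strictly feasible primal point, i.e., a Slater-type condition.

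Next I would form the Lagrangian dual. Attaching $\vW_i\succcurlyeq 0$ to the $i$th LMI and $y\in\real$ to the normalization $\int_\mathcal{I} d\xi=1$, the Lagrangian becomes
\[ z\Bigl(1+\sum_i\langle\vW_i,\vB_i\rangle\Bigr) + \Bigl\langle\sum_i C_i^*(\vW_i),\vu\Bigr\rangle + \int_\mathcal{I}\Bigl(\sum_i\langle\vW_i,A_i(\vM(\xi_t))\rangle - y\Bigr)d\xi(t) + y + \sum_i\langle\vW_i,\vD_i\rangle. \]
Boundedness of the supremum over $(\xi,z,\vu)$ forces the dual constraints $\sum_i\langle\vW_i,\vB_i\rangle=-1$, $\sum_i C_i^*(\vW_i)=0$, and the pointwise inequality $y\ge\sum_i\langle\vW_i,A_i(\vM(\xi_t))\rangle$ for every $t\in\mathcal{I}$, with dual objective $y+\sum_i\langle\vW_i,\vD_i\rangle$. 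Absorbing the constant $\sum_i\langle\vW_i,\vD_i\rangle$ into $y$ by a translation makes the objective $\min y$ and rewrites the pointwise inequality as $y-\sum_i\langle\vW_i,A_i(\vM(\xi_t))+\vD_i\rangle\ge 0$ for all $t\in\mathcal{I}$; this is exactly \eqref{eq:final-BC}.

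I would then encode the pointwise inequality as a semidefinite constraint. Multiplying through by $\ell(t):=\lcm(\den(\omega),\den(f_1^2),\dots,\den(f_m^2))$, which is strictly positive on $\mathcal{I}$, clears every denominator appearing in the entries of $\vM(\xi_t)=\vf(t)\vf(t)^\T\omega(t)$; the result is the polynomial $\pi$ of degree $d$ defined by \eqref{eq:def_pi}, whose coefficient vector $\Pi(y,\vW_1,\dots,\vW_p)$ is linear in the dual variables. Nonnegativity of $\pi$ on the finite union of closed intervals $\mathcal{I}$ is semidefinite representable by \mbox{Lemma \ref{lem:pospoly-sd-rep}} applied on each interval and intersected, recovering \eqref{eq:POP-constr}. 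This identifies the dual with the SDP \eqref{eq:final}.

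Finally I would invoke strong duality and complementary slackness. The Slater condition furnished by admissibility gives strong duality and attainment of the dual optimum. Complementary slackness between the primal cone constraint $\xi\ge 0$ and the dual pointwise inequality then reads
\[ \int_\mathcal{I}\Bigl(y^* - \sum_i\langle\vW_i^*,A_i(\vM(\xi_t))+\vD_i\rangle\Bigr)d\xi^*(t) = 0. \]
Since the integrand equals $\pi^*(t)/\ell(t)\ge 0$ on $\mathcal{I}$ and $\xi^*\ge 0$, the support of every $\Phi$-optimal design $\xi^*$ must lie in the zero set of $\pi^*$, proving the theorem. The main obstacle I anticipate is the rigorous justification of strong duality and dual attainment for this conic program over the Banach space of signed measures on $\mathcal{I}$: one needs to verify that the admissibility hypothesis really does implement the appropriate generalized Slater condition, so that the formal manipulations above are underwritten by the standard infinite-dimensional conic duality theory and the support characterization emerges as a genuine complementary slackness identity on $\mathcal{I}$.
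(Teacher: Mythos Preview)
Your proposal is correct and follows the same overall strategy as the paper: dualize the $\Phi$-optimal design problem via Lagrangian duality, rewrite the resulting pointwise inequality over $\mathcal{I}$ as a polynomial nonnegativity constraint (hence an LMI), and extract the support from complementary slackness. The difference is in \emph{where} you take the dual. You dualize the full infinite-dimensional conic program in the measure $\xi$, attaching $y$ to the normalization $\int d\xi = 1$ and then taking the supremum over nonnegative measures; this is why you (correctly) flag the justification of strong duality over the space of measures as the main obstacle. The paper sidesteps this obstacle: it first replaces $\xi$ by its Fisher information matrix $\vM\in\Mm=\conv\{\vM(\xi_t):t\in\mathcal{I}\}$, so the primal becomes a \emph{finite-dimensional} convex program over $\Sm$ with the compact convex constraint $\vM\in\Mm$ left undualized, and then takes the Lagrangian dual only with respect to the LMI constraints $A_i(\vM)+\vB_iz+C_i(\vu)+\vD_i\succcurlyeq 0$. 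Admissibility then gives a genuine finite-dimensional Slater point, so strong duality and attainment follow from standard convex analysis with no infinite-dimensional subtleties. The variable $y$ enters afterwards as a dummy upper bound on $\sup_{\vM\in\Mm}\sum_i\langle\vW_i,A_i(\vM)+\vD_i\rangle$, and the pointwise inequality over $t\in\mathcal{I}$ arises because the supremum over the convex hull $\Mm$ is attained at an extreme point $\vM(\xi_t)$. From that point on the two arguments coincide. Your route is perfectly viable, but to make it rigorous you would need, e.g., the compactness of $\mathcal{I}$ and continuity of $t\mapsto\vM(\xi_t)$ to invoke a semi-infinite or conic duality theorem; the paper's finite-dimensional reduction buys you a cleaner justification at essentially no cost.
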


Note that the operator $\Pi$ in \eqref{eq:def_pi} is affine, hence aside from \eqref{eq:POP-constr} every constraint in \eqref{eq:final} is a linear equation or linear matrix inequality. Furthermore, \eqref{eq:POP-constr} can be translated to linear matrix inequalities using \mbox{Lemma \ref{lem:pospoly-sd-rep}}. Hence, \eqref{eq:final} is indeed a semidefinite program.

Not wanting to defer the discussion of examples and extensions, the proof was moved to the Appendix. Instead, we discuss a few examples.

\section{Examples}\label{sec:examples}

We start with two detailed examples demonstrating how E- and A-optimal design problems translate to semidefinite optimization models. Then the numerical robustness of the proposed method is investigated using a high degree polynomial model. Finally, an example with rational models is shown, in which the parameters of point sources emitting radiation are estimated from measurements of total intensity.

All timing results in this paper were obtained using the semidefinite solver SeDuMi \cite{sedumi} running on an ordinary desktop computer with a 2.83GHz processor, using a single core.

\begin{example}[E-optimal designs without an intercept]\label{ex:numerical1}
This problem was considered in \cite{CH-96}, and we use it here to illustrate the steps of the approach and to verify the correctness of our model in a relatively high degree model that has been solved: $\mathcal{I}=[-1,1]$, $f_i = t^i$, $i=1, \dots, m$, and $\omega$ is a positive constant. Using the semidefinite representation of E-optimality given in \mbox{Example \ref{ex:E-optimality}}, the variables in the optimization model of \mbox{Theorem \ref{thm:main}} are the scalar $y$ and the positive semidefinite matrix $\vW_1$ of order $m$. The constraints can be derived as follows: from \mbox{Example \ref{ex:E-optimality}} we have $A_1 = \operatorname{id}$, $\vB_1 = -I$, $C_1 \equiv 0$, and $\vD_1 = 0$, hence $\langle \vB_1,\vW_1 \rangle = -\tr(\vW_1)$, and $\langle \vC_1,\vW_1 \rangle = 0$. Also note that 
\[\vM(\xi_t) = (t, t^2, \dots, t^m)^\T (t, t^2, \dots, t^m) = \left(\begin{smallmatrix}t^2 & t^3 & \cdots & t^{m+1}\\t^3 & t^4 & \cdots & t^{m+2}\\
\vdots & & & \vdots \\
t^{m+1} & t^{m+2}& \cdots & t^{2m}
\end{smallmatrix}\right),\]
where the last matrix has $t^{i+j}$ as its $(i,j)$-th entry.

Hence, the first constraint of \eqref{eq:final-BC} is $\tr(\vW_1) = 1$, whereas the second constraint of \eqref{eq:final-BC} is simply $0=0$, and can be omitted. We have $\deg(\pi)=2m$, and the correspondence between the entries of $\vW_1$ and the coefficients of $\pi(t) = \sum_{i=0}^{2m} p_i t^i$, given in \eqref{eq:def_pi}, simplifies to the system of equations
\[ p_0 = y, \; p_1 = 0,\;\text{and}\; p_{k} = -\sum_{i+j=k} (\vW_1)_{ij}\; \text{for}\;k=2,\dots,2m .\]
In summary, dropping the subscript from $\vW_1$, we have the optimization problem
\begin{align*}
\mathop{\text{minimize}}_{y\in\real, \pi\in\real^{2m}, \vW\in\Smp}\;\; & y \\
\text{subject to}\;\; & \tr(\vW) = 1, \\
                      & \pi = (y, 0, S_2, \dots, S_{2m}) \in P^{[-1,1]},
\end{align*}
where $S_{k} = -\sum_{i+j=k} W_{ij}$ $(k=2, \dots, 2m)$ are the anti-diagonal sums of the matrix $\vW$, and the constraint $\pi = (y, 0, S_2, \dots, S_{2m}) \in P^{[-1,1]}$ can be turned into the system of linear and semidefinite constraints \eqref{eq:EvenDegreePolyChar} given in the Appendix, plugging in $a=-1, b=1$.

For practical computations several Matlab toolboxes, such as CVX \cite{cvx} and YALMIP \cite{yalmip}, are available to facilitate the translation of semidefinite programs such as the one above to the the so-called ``standard form'' required by semidefinite solvers. Rather than providing a detailed description or comparison of these programs, we offer a completely self-explanatory example, the formulation of the above problem in the language of the CVX toolbox, in \mbox{Figure \ref{fig:CVX_1}}. Note that both the trace constraint and the nonnegative polynomial constraint are represented at the same high level in the code as in the mathematical model above. They are translated to a standard form semidefinite program and solved using a semidefinite programming solver automatically by CVX, leaving virtually no work to the user.

\begin{figure}
\small
\begin{verbatim}
cvx_begin
    m = 8;
    variable y;
    variable W(m,m) symmetric;
    variable pi(2*m+1);
    
    minimize y;
    subject to
        W == semidefinite(m);
        trace(W) == 1;
        pi(1) == y;
        pi(2) == 0;
        -pi(3)  == W(1,1);
        ...
        -pi(17) == W(8,8);

        pi(end:-1:1) == nonneg_poly_coeffs(2*m, [-1,1]);
cvx_end
\end{verbatim}
\normalsize
\caption{Matlab solution for Example \ref{ex:numerical1} using the CVX toolbox. Note that Matlab indexes vectors starting from 1 instead of 0. The equations defining the coefficients $\pi_4$ through $\pi_{16}$ have been omitted for brevity.} \label{fig:CVX_1}
\end{figure}

For example, solving the resulting problem for $m=8$, the optimal vector $\pi$ is the coefficient vector of a degree 16 polynomial whose real roots are: $\{\pm 1, \pm 0.9207, \pm 0.693, \pm 0.3357\}$. It also has two imaginary roots. The eight real roots constitute the support of the E-optimal design. The same numerical example was considered in \cite{CH-96} with, of course, the same conclusion. The running time of SeDuMi in this example was 0.2 seconds.
\end{example}

\begin{example}[A heteroscedastic polynomial model]
Consider the cubic model $f_i = t^{i-1}$, $i=1,\dots,4$, with heteroscedastic noise given by $\omega(t)=1/(1+t^2)$, over the design space $[-5,5]$. We chose this arbitrary model because it is one of the simplest among those whose solution appears to not to be characterized.

The A-optimal design is computed as follows. The parameters $A_i, \vB_i, C_i, \vD_i$ in the semidefinite representation \eqref{eq:SDP-rep_def} of A-optimality are determined first from \mbox{Example \ref{ex:A-optimality}}. Using this representation, the constraints of the semidefinite programming problem in \mbox{Theorem \ref{thm:main}} are compiled in the following way.
\begin{itemize}
\item There are 4 semidefinite matrices $\vW_1,\dots,\vW_4$ of order $5$, and $\vW_5$ is a nonnegative scalar.
\item The first constraint of \eqref{eq:final-BC} is simply $\vW_5=1$. The second constraint of \eqref{eq:final-BC} translates to $(\vW_i)_{5,5}=1$ for each $i=1,\dots,4$.
\item We have $\deg(\pi)=6$, and comparing the coefficients on the two sides of \eqref{eq:def_pi}, we obtain a linear system of equations and matrix inequalities for \eqref{eq:PI-constr} and \eqref{eq:POP-constr}, along the same lines as in the previous example.
\end{itemize}
The optimal solution is a polynomial whose real roots are $\{\pm5, \pm0.854\}$, this is the support of the A-optimal design.
\end{example}

In the remaining examples we shall refrain from the detailed list of the above steps, and concentrate on the main features of the models and the numerical results.

\begin{example}[Polynomial models of high degree]
We now consider the problem of designing experiments for very high degree polynomial models in order to test the numerical stability and scalability of our approach. Models involving high degree polynomials are rarely justifiable, but they are good problems to test numerical stability, as they are notoriously ill-conditioned. For example, in the basic polynomial model, when $f_i = t^{i-1}, i=1, \dots, n$, the the Fisher information matrix $\vM(\xi)$ in \eqref{eq:Fisher} becomes a Hankel matrix, which is known to be ill-conditioned \cite{Tyrtyshnikov94}. Also note that in the case of rational models, the polynomial defined by \eqref{eq:def_pi} might also have a degree that is considerably higher than the degree of the numerators and denominators of the functions $f_i$, leading to potentially ill-conditioned optimization models.
The numerical difficulties can be somewhat alleviated by using an orthogonal polynomial basis in \eqref{eq:model}. In this example we look for the E-optimal polynomial design in the ordinary polynomial regression model, but using the Legendre polynomial basis: $f_i=P_{i-1}$, the $(i-1)$-st Legendre polynomial defined by $P_0 = 1, P_1(t)=t$, and $P_{n+1}(t)=(n+1)P_{n+1}(t)+(2n+1)tP_{n}(t)-nP_{n-1}(t)$ for $n\geq 1$.

The constraints are obtained along the same lines as in \mbox{Example \ref{ex:numerical1}}, except that the coefficients of $\pi$ in \eqref{eq:def_pi} need to be changed as the moment matrix $\vM(\xi_t)$ changes with the change of basis.

We solved the resulting semidefinite program for the degree $20$ model; the computation required 0.4 seconds. The optimal polynomial $\pi$ is a nonnegative polynomial on $[-1,1]$ with single roots at the endpoints $\pm 1$, and double real roots at the points \[\{\pm0.981, \pm0.937, \pm0.872, \pm0.788, \pm0.686, \pm0.568, \pm0.438, \pm0.297, \pm0.150, 0\}.\] The E-optimal design is supported on these 21 points.
\end{example}

We remark that the use of high degree polynomials can also be circumvented using polynomial splines, which allow for the same large number of parameters without numerical difficulties; this will the subject of a forthcoming paper.

Finally, we present an example using rational functions.

\begin{example}[Measuring radiation parameters]\label{ex:numerical-radiation}
Consider the measurement of total radiation emitted from point sources, whose intensity obeys the inverse square law: $I_i(r)=\theta_i r^{-2}$ where $I_i$ is the intensity of the radiation emitted by source $i$ measured at distance $r$ from the source, for some unknown parameter $\theta_i$. The locations $x_i$ of the sources are known. The response variable in our model \eqref{eq:model} is the total radiation. To be estimated are the values $\theta_i$, affected by parameters of sources, shielding between the sources and detector, and several other factors. In this numerical example we consider a simple one-dimensional instance: the locations of the three sources are $x_1=-2$, $x_2 = 2$, $x_3 = 4$, and we are interested in the effective values of $\theta_i$ as measurable in the interval $[-1,1]$, where the variance of the measurement error and the parameters are assumed to be constant.

The distance of a detector at $t$ from the $i$th point source is $r_i = |t-x_i|$, so in our model \eqref{eq:model} we have $f_i = r_i^{-2} = (t-x_i)^{-2}, i=1,2,3$, and $\mathcal{I}=[-1,1]$. The solution of the semidefinite program, which took 0.2 seconds, yielded a three-point support for the E-optimal design: $\{-1, 0.231, 1\}$.
\end{example}

%

\section{Reconstructing the optimal design}\label{sec:reconstruction}

Once we obtained a non-zero polynomial $\pi$ from the optimal solution of \eqref{eq:main}, we can find the optimal design by solving a second semidefinite programming problem, using \mbox{Theorem \ref{thm:eperfagyi}}. But Theorem \ref{thm:main} is only useful if the polynomial $\pi$ in the optimal solution is not the zero polynomial. As the following example shows, in sufficiently degenerate cases it might be.

\begin{example}\label{ex:zeropoly}
Consider the E-optimal design problem when $m=2$, $\vf(t) = (1, t)^\T$, $\omega =1$, and $\mathcal{I} = [-1,1]$. Then the corresponding semidefinite programming problem simplifies to
\[ \min_{y,\vW}\; y\quad \textrm{s.t.}\; \vW\succcurlyeq 0,\; \tr(\vW) = 1,\; \pi = (y-W_{11}, -2W_{12}, -W_{22}) \in P^{[-1,1]}, \]
by essentially the same calculations as in \mbox{Example \ref{ex:numerical1}}.
It is not hard to see that the set of optimal solutions to this problem is
$\{ (y,\vW)\;|\; y = 1, W_{12} = 0, W_{11}+W_{22} = 1, 0\leq W_{11}\leq 1\}.$ Hence, we have infinitely many solutions, including $W_{11} = 1-W_{22} = 1$, which corresponds to $\pi(t) = 0$. Choosing any other optimal solution yields a polynomial whose roots are the expected $t=\pm 1$.

Alternatively, we can change $f$ to a different basis of degree one polynomials. This does not really change the model, however, if we choose, for example, $\vf(t) = (\alpha, t)^\T$ for any $\alpha > 1$, the above problem disappears: the resulting semidefinite programming problem has a unique optimal solution, and that solution corresponds to a nonzero polynomial $\pi$, with two real roots.
\end{example}

In the rest of the section we list a number of sufficient conditions that ensure that the optimal $\pi$ in \eqref{eq:final} is not the zero polynomial. The first one is perhaps the most obvious one.

\begin{lemma}
Let $f_1, \dots, f_m$ and $\omega$ in \eqref{eq:model} be chosen such that $1 \not\in \operatorname{span} \{\omega f_i f_j\;|\; 1\leq i \leq j \leq m \}$. Then no solution satisfying the constraints of \eqref{eq:final} has $\pi = 0$.
\end{lemma}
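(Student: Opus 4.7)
The plan is proof by contradiction: assume that some $(y, \pi, \vW_1, \ldots, \vW_p)$ satisfies every constraint of \eqref{eq:final} with $\pi = 0$, and derive an inconsistency with the hypothesis together with the normalization $\sum_i \langle \vW_i, \vB_i\rangle = -1$.

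First, I translate $\pi = 0$ into a functional identity. Since $\pi$ is the coefficient vector of the polynomial in \eqref{eq:def_pi} and the multiplier $L := \lcm(\den(\omega), \den(f_1^2), \ldots, \den(f_m^2))$ is not the zero polynomial, $\pi = 0$ is equivalent to the factor $y - \sum_i \langle \vW_i, A_i(\vM(\xi_t)) + \vD_i\rangle$ vanishing as a rational function of $t$ on $\mathcal{I}$. Setting $c := y - \sum_i \langle \vW_i, \vD_i\rangle$ and $\vH := \sum_i A_i^*(\vW_i) \in \Sm$, and using $\vM(\xi_t) = \omega(t) \vf(t) \vf(t)^\T$ together with the adjoint identity $\langle \vW_i, A_i(\vf\vf^\T)\rangle = \vf^\T A_i^*(\vW_i) \vf$, this rearranges to
$$ c \;=\; \omega(t)\, \vf(t)^\T \vH \vf(t) \quad\text{for every } t \in \mathcal{I}. $$

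Second, I invoke the hypothesis. The right-hand side belongs to $V := \operatorname{span}\{\omega f_i f_j : 1 \le i \le j \le m\}$; if $c$ were nonzero, dividing by $c$ would place $1$ in $V$, contradicting the hypothesis. Hence $c = 0$, and the identity reduces to $\omega(t) \vf(t)^\T \vH \vf(t) \equiv 0$. Since $\omega(t) > 0$ on $\mathcal{I}$, this gives $\vf(t)^\T \vH \vf(t) \equiv 0$.

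Third, I exploit positivity of $\vH$. In the standard semidefinite representations of D-, E-, A-, and $\Phi_p$-optimality, each adjoint $A_i^*$ sends $\mathbb{S}^{k_i}_+$ into $\Smp$, so $\vH \succcurlyeq 0$; combined with $\vf(t)^\T \vH \vf(t) \equiv 0$, this forces $\vH \vf(t) = 0$ for every $t \in \mathcal{I}$. Linear independence of $f_1, \ldots, f_m$ on the infinite set $\mathcal{I}$ supplies $m$ points $t_1, \ldots, t_m$ at which the vectors $\vf(t_i)$ span $\real^m$, whence $\vH = 0$. Since $\vH$ is the sum of positive semidefinite matrices $A_i^*(\vW_i)$, each summand also vanishes, so $\vW_i \in \ker(A_i^*) \cap \mathbb{S}^{k_i}_+$ for every $i$.

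The main obstacle is the closing step: showing that the $\vW_i$'s constrained to lie in these restricted cones cannot jointly satisfy $\sum_i \langle \vW_i, \vB_i\rangle = -1$ together with $\sum_i C_i^*(\vW_i) = 0$. In simple cases this is immediate---for E-optimality $A_1^*$ is the identity map, so $\vW_1 = 0$ directly contradicts the trace-one normalization---but for more intricate representations one has to carry out a structural analysis of how the data $(\vB_i, C_i)$ interacts with $\ker(A_i^*)$ on the psd cone to close the contradiction.
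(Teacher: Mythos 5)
The paper states this lemma without proof (it is introduced as ``perhaps the most obvious one''), so there is no proof of record to compare against; judged on its own, your proposal is not yet a proof, and the obstacle you flag at the end is a genuine gap rather than a technicality. Your first steps are sound and are surely the core of the intended argument: $\pi=0$ forces the identity $c=\omega(t)\,\vf(t)^\T\vH\vf(t)$ on $\mathcal{I}$ with $\vH=\sum_iA_i^*(\vW_i)$, the hypothesis $1\notin\operatorname{span}\{\omega f_if_j\}$ kills the constant $c$, and (granting $\vH\succcurlyeq0$) positive semidefiniteness together with linear independence of the $f_i$ gives $\vH=0$. The intended one-line argument presumably stops at ``otherwise $1$ would lie in the span,'' which silently assumes $c\neq0$; you are right that the case $c=0$ is where the work is. Two caveats, though: your claim that $A_i^*$ maps $\Ski_+$ into $\Smp$ is verified only for the E- and A-optimality representations, not for the D- and $\Phi_p$-representations built from \cite{BTN-01}, so even the reduction to $\vW_i\in\ker(A_i^*)\cap\Ski_+$ is representation-specific and unproved in general.

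More importantly, the closing step cannot be completed at the level of generality you (and the lemma, read literally as a statement about all feasible points) aim for. For the paper's own A-optimality representation (Example \ref{ex:A-optimality}), the point $y=0$, $\vW_k=\ve_{m+1}\ve_{m+1}^\T$ for $k=1,\dots,m$, $\vW_{m+1}=1$ satisfies every constraint of \eqref{eq:final}: the equalities \eqref{eq:final-BC} reduce to $\vW_{m+1}=1$ and $(\vW_k)_{m+1,m+1}=1$, each $\vW_k$ is positive semidefinite, and the polynomial \eqref{eq:def_pi} is identically zero, hence trivially in $P_d^{\mathcal{I}}$ --- all of this irrespective of whether $1\in\operatorname{span}\{\omega f_if_j\}$. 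So no structural analysis of how $(\vB_i,C_i)$ interacts with $\ker(A_i^*)$ on the cone can yield the contradiction from feasibility alone; here $\vW_k\in\ker(A_k^*)\cap\mathbb{S}^{m+1}_+$ is perfectly compatible with the normalization. The statement is salvageable in the spirit of the surrounding text, which speaks of the \emph{optimal} $\pi$: at an optimal solution $y$ equals $\max_{\vM\in\Mm}\Phi(\vM)$, and for A-optimality your chain of deductions additionally forces $y=0$, contradicting $\max_{\vM\in\Mm}(-\tr(\vM^{-1}))<0$. You should therefore restructure the endgame to exploit optimality (the value of $y$) rather than mere feasibility, and accept that this last step remains criterion- and representation-specific.
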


Special cases covered by this lemma include designs for incomplete polynomial models with no intercept, such as those considered in \cite{HCW-95} and \cite{CH-96}, and models involving rational functions, such as \mbox{Example \ref{ex:numerical-radiation}} above.

The last observation of \mbox{Example \ref{ex:zeropoly}} also generalizes to E-optimal designs for arbitrary polynomial systems.
\begin{lemma}
Consider the E-optimal design problem for a polynomial model with at least two parameters to be estimated. By choosing an appropriate basis $\{f_1, \dots, f_m\}$ in \eqref{eq:model} it can be guaranteed that no optimal solution of \eqref{eq:final} has $\pi = 0$.
\end{lemma}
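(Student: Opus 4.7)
The plan is to characterize precisely which feasible points of \eqref{eq:final} have $\pi \equiv 0$ for the E-optimal SDP, and then exploit the basis-dependence of that characterization to eliminate such points as optima.

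For the homoscedastic E-optimal problem, \eqref{eq:final} reduces to minimizing $y$ over $\vW \succcurlyeq 0$ with $\tr(\vW) = 1$ subject to $y - \vf(t)^\T \vW \vf(t) \geq 0$ on $\mathcal{I}$, and $\pi \equiv 0$ is equivalent to the polynomial identity $\vf(t)^\T \vW \vf(t) \equiv y$ (which follows once it holds on the infinite set $\mathcal{I}$). Writing $\vW = \sum_i \lambda_i v_i v_i^\T$ spectrally with $\lambda_i \geq 0$, the left-hand side becomes $\sum_i \lambda_i (v_i^\T \vf)^2(t)$, a sum of nonnegative polynomials whose degree equals $2\max_i \deg(v_i^\T \vf)$: the positive leading coefficients of the squared terms cannot cancel. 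Hence each $v_i^\T \vf$ with $\lambda_i > 0$ must be a constant polynomial, so $\vW$ is supported on the subspace $V := \{v \in \real^m : t \mapsto v^\T\vf(t) \text{ is constant}\}$. If $1 \not\in \operatorname{span}\{f_1,\ldots,f_m\}$, then $V = \{0\}$ and $\pi \equiv 0$ is infeasible in any basis; otherwise $V$ is one-dimensional, spanned by the unique $\beta \in \real^m$ with $\beta^\T \vf \equiv 1$, and the only trace-$1$ PSD matrix supported on $V$ is $\vW = \beta\beta^\T/\|\beta\|^2$, yielding objective $y = 1/\|\beta\|^2$.

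Next I exploit this rigidity. Without loss of generality start from a basis in which one of the $f_i$ is the constant function $1$, and rescale it by setting $\tilde f_1 = \alpha$ and $\tilde f_i = f_i$ for $i \geq 2$, where $\alpha > 0$ is a parameter to be chosen. In this new basis $\tilde\beta = \alpha^{-1}\ve_1$, so the unique $\pi\equiv 0$ feasible point has objective value $1/\|\tilde\beta\|^2 = \alpha^2$, which grows without bound in $\alpha$. To bound the true optimum $y^\ast$ independently of $\alpha$, exhibit the explicit feasible matrix $\hat\vW := (m-1)^{-1}\sum_{i=2}^m \ve_i\ve_i^\T$; it is PSD with trace $1$, and its first row and column vanish, so $\tilde\vf^\T \hat\vW \tilde\vf = (m-1)^{-1}\sum_{i\geq 2} f_i(t)^2$ is independent of $\alpha$. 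Its maximum on the compact set $\mathcal{I}$ is a finite constant $\hat y$ (its existence uses $m \geq 2$), giving $y^\ast \leq \hat y$. Choosing any $\alpha$ with $\alpha^2 > \hat y$ makes every feasible point with $\pi \equiv 0$ strictly suboptimal, so the optimum must have $\pi \not\equiv 0$.

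The step I expect to require the most care is the leading-term argument reducing $\pi \equiv 0$ to the specific rank-one matrix $\beta\beta^\T/\|\beta\|^2$; its correctness depends on the nonnegativity of the squared polynomials $(v_i^\T \vf)^2$, which prevents cancellation among leading coefficients. This is also precisely the reason the lemma is stated only for polynomial models---extending the argument to rational weight functions would require a more delicate clearing-of-denominators analysis for the identity $\omega(t)\vf(t)^\T \vW \vf(t) \equiv y$, which is not attempted here.
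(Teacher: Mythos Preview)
Your proof is correct and follows essentially the same approach as the paper. Both arguments first use a sum-of-squares decomposition of $\vf(t)^\T\vW\vf(t)$ (the paper via $\vW=\vY^\T\vY$, you via the spectral decomposition) to show that $\pi\equiv 0$ forces $\vW$ to be the unique rank-one matrix supported on the constant direction, and then rescale the basis so that this candidate becomes strictly suboptimal. The only cosmetic difference is that the paper shrinks $f_2,\dots,f_m$ by a small $\lambda$ (keeping $f_1=1$) while you enlarge $f_1$ to $\alpha$; these are equivalent moves, and your explicit competitor $\hat\vW=(m-1)^{-1}\sum_{i\ge 2}\ve_i\ve_i^\T$ makes the final inequality slightly more transparent than the paper's terse ``$\lambda|f_i(t)|<1$'' remark.
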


\begin{proof}
Let $(\hat y,\hat \vW,\hat\pi)$ be an optimal solution to \eqref{eq:final}. Then $\hat \vW = \vY^\T \vY$ for some matrix $\vY$, and the polynomial $q\colon t\to \langle \hat \vW, \vM(\xi_t) \rangle$ can be written as $q(t) = \vz(t)^\T\vz(t)$ with $\vz(t) = \vY\vf(t)$. Consequently, $q$ can only be a constant (and $\hat\pi$ can only be the zero polynomial) if $\vz(t) = \vY \vf(t)$ is componentwise constant.

If $1 \not\in \operatorname{span} \{f_1,\dots,f_m\}$, then this is impossible, because $\vY=0$ is excluded by the constraints \eqref{eq:final-BC}, which simplifies to $\tr(\vW) = 1$ for E-optimal designs.

If $1 \in \operatorname{span} \{f_1,\dots,f_m\}$, then we can assume without loss of generality that $f_1 = 1$. Now $q$ can be a constant only if $\hat W_{11}=1$, and every other entry of $\hat \vW$ is zero, making $q(t) = 1$ and $\hat y=1$. Replacing $f_i$, $i \geq 2$ by $\lambda f_i$ with a sufficiently small positive $\lambda$ that satisfies $\lambda |f_i(t)| < 1$ for all $t\in\mathcal{I}$ ensures that this is not the optimal solution to \eqref{eq:final}.
\end{proof}

A similar argument applies to A-optimal designs for polynomial models. For brevity we omit the details. As above, one can argue that by scaling the non-constant basis functions, solutions to the semidefinite programming problem that yield constant zero $\pi$ cannot be optimal.
\begin{lemma}
Consider the A-optimal design problem for a polynomial model with at least two parameters to be estimated. By choosing an appropriate basis $\{f_1, \dots, f_m\}$ in \eqref{eq:model} it can be guaranteed that no optimal solution of \eqref{eq:final} has $\pi = 0$.
\end{lemma}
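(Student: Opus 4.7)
The plan is to imitate the preceding lemma's argument while navigating the more elaborate dual structure arising from the A-optimality semidefinite representation of Example~\ref{ex:A-optimality}.

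First, I would specialize \eqref{eq:final} to A-optimality: the dual variables are matrices $\hat\vW_k \in \mathbb{S}^{m+1}_+$ for $k = 1, \dots, m$ together with a nonnegative scalar $\hat\vW_{m+1}$, and a short computation of the adjoints $C_i^*$ and the products $\langle \vW_i, \vB_i \rangle$ shows that the constraints \eqref{eq:final-BC} reduce to $\hat\vW_{m+1} = 1$ and $(\hat\vW_k)_{m+1, m+1} = 1$ for $k = 1, \dots, m$. Setting $\vZ := \sum_{k=1}^m (\hat\vW_k)_{1:m, 1:m}$ (which is positive semidefinite) and $c := 2\sum_{k=1}^m (\hat\vW_k)_{m+1, k}$, the polynomial in \eqref{eq:def_pi} collapses to $q(t) = \omega(t) \vf(t)^\T \vZ \vf(t) + c$, and $\hat\pi \equiv 0$ is equivalent to $q \equiv \hat y$ being constant in $t$.

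Next, I would show that constancy of $q$ forces $\hat y \geq -1$. Choose the basis so that $f_1 \equiv 1$; the complementary case $1 \notin \operatorname{span}\{f_i\}$ reduces to the preceding lemma, since the Gram factorization $\vZ = \vY^\T\vY$ then forces $\vY = 0$ and hence $\vZ = 0$. With $\omega \equiv 1$, constancy of $q$ is equivalent to constancy of each component of $\vY\vf(t)$, and linear independence of $\{1, f_2, \dots, f_m\}$ forces $\vY$ to vanish on columns $2, \dots, m$, so $\vZ = \alpha\ve_1\ve_1^\T$ for some $\alpha \geq 0$. The main obstacle is then to track what this imposes on the full dual matrices $\hat\vW_k$: since each $\vZ_k := (\hat\vW_k)_{1:m, 1:m}$ is positive semidefinite with $\sum_k \vZ_k = \alpha\ve_1\ve_1^\T$, each $\vZ_k$ must itself equal $z_k \ve_1\ve_1^\T$ with $z_k \geq 0$ and $\sum z_k = \alpha$; then PSD of $\hat\vW_k$ combined with vanishing diagonals $(\hat\vW_k)_{j, j} = 0$ for $j = 2, \dots, m$ forces $(\hat\vW_k)_{m+1, j} = 0$ for those same $j$, so only $(\hat\vW_1)_{m+1, 1}$ contributes to $c$. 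The $2 \times 2$ principal submatrix of $\hat\vW_1$ at rows $\{1, m+1\}$ yields $|(\hat\vW_1)_{m+1, 1}| \leq \sqrt{z_1} \leq \sqrt\alpha$, hence $c \geq -2\sqrt\alpha$ and $\hat y = \alpha + c \geq \alpha - 2\sqrt\alpha \geq -1$.

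Finally, I would exhibit an explicit feasible point achieving $\hat y < -1$: for positive scalars $s_1, \dots, s_m$, take $\hat\vW_k := (s_k \ve_k - \ve_{m+1})(s_k \ve_k - \ve_{m+1})^\T \in \mathbb{S}^{m+1}_+$ for $k = 1, \dots, m$, and $\hat\vW_{m+1} = 1$. Each $\hat\vW_k$ is rank-one positive semidefinite with $(\hat\vW_k)_{m+1, m+1} = 1$, so feasibility is immediate. The resulting polynomial is $q(t) = \sum_k s_k^2 f_k(t)^2 - 2\sum_k s_k$; choosing $s_k := 1/M_k$ with $M_k := \max_{t \in \mathcal{I}} f_k(t)^2$ (finite by compactness of $\mathcal{I}$) yields $\max_t q(t) \leq -\sum_k 1/M_k$. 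Since $f_1 \equiv 1$ gives $1/M_1 = 1$ and $1/M_k > 0$ for $k \geq 2$ (using $m \geq 2$), this upper bound is strictly less than $-1$, contradicting the lower bound from the previous step. Hence no optimum can have $\hat\pi = 0$. The scaling remark preceding the lemma (replacing $f_i$ by $\lambda f_i$ for $i \geq 2$ with $\lambda$ small) merely enlarges $\sum_k 1/M_k$ and so strengthens the inequality further.
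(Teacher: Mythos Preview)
Your argument is essentially correct and considerably more detailed than the paper, which omits the proof entirely and only remarks that ``by scaling the non-constant basis functions, solutions \dots\ that yield constant zero $\pi$ cannot be optimal.'' Your explicit dual-feasible point $\hat W_k=(s_k e_k-e_{m+1})(s_k e_k-e_{m+1})^\T$ with $s_k=1/M_k$ is a clean replacement for the scaling step: it shows directly that once $f_1\equiv 1$ the optimal value already satisfies $\hat y<-1$, so no further rescaling of $f_2,\dots,f_m$ is needed. That is in fact slightly sharper than what the paper's hint suggests.

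There is one small gap. In the case $1\notin\operatorname{span}\{f_i\}$ you say the argument ``reduces to the preceding lemma'' because the Gram factorization forces $Z=0$. But unlike E-optimality, where $\operatorname{tr}(W)=1$ immediately rules out $W=0$, here $Z=0$ is perfectly compatible with the constraints \eqref{eq:final-BC}: one can take each $\hat W_k=e_{m+1}e_{m+1}^\T$. So you have not yet reached a contradiction. The missing step is short: if $Z=0$ then each $(\hat W_k)_{1:m,1:m}=0$, PSD-ness kills the cross terms, hence $c=0$ and $\hat y=0$; but your own feasible point already gives $y\le -\sum_k 1/M_k<0$, so $\hat y=0$ cannot be optimal. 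Once you insert this line, the argument is complete.
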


Finally, as a corollary to \mbox{Theorem \ref{thm:main}}, we also obtain an upper bound on the size of the support set of the minimally-supported optimal designs.

\begin{corollary}\label{cor:degree_bound}
Let $n_\omega$ and $d_\omega$ be the degree of the numerator and denominator of $\omega$, $n_i$ and $d_i$ be the degree of the numerator and denominator of $f_i$, and $d_{\den} = \lcm(d_\omega, d^2_1, \dots, d^2_p)$. Furthermore, suppose that $\mathcal{I}$ is the union of $k_1+k_2$ disjoint closed intervals, $k_1$ of which are singletons. (The remaining $k_2$ intervals have distinct endpoints.) Then for every admissible criterion $\Phi$ for which the optimal solution to \eqref{eq:final} does not have $\pi = 0$ there is a $\Phi$-optimal design supported on not more than
$\min (\tfrac{1}{2}(k_1 + 2k_2 + \deg\pi), \deg\pi)$
points, where $\deg \pi = d_{\den}+(n_\omega - d_\omega + 2 \max_i (n_i - d_i))_+$.
\end{corollary}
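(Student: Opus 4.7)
The plan is to combine two ingredients: Theorem \ref{thm:main}, which guarantees that the support of every $\Phi$-optimal design is contained in the real zeros of the polynomial $\pi$ returned by \eqref{eq:final}, and a degree-versus-multiplicity count for $\pi$ on $\mathcal{I}$. Thus the task reduces to (i) verifying that $\deg\pi$ is as stated and (ii) bounding the number of \emph{distinct} real zeros of $\pi$ lying in $\mathcal{I}$.

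For (i), I would unpack \eqref{eq:def_pi}. The entries of $\vM(\xi_t) = \omega(t)\vf(t)\vf(t)^\T$ are rational functions whose denominators all divide $q := \lcm(\den(\omega),\den(f_1^2),\dots,\den(f_p^2))$, because $\den(\omega f_i f_j)$ divides $\den(\omega)\cdot\lcm(\den(f_i^2),\den(f_j^2))$. Hence $q\cdot(y - \sum_i\langle\vW_i,A_i(\vM(\xi_t))+\vD_i\rangle)$ is indeed a polynomial. Its degree is the maximum of $\deg q = d_{\den}$ (coming from the $y\cdot q$ term) and $d_{\den} + n_\omega - d_\omega + 2\max_i(n_i - d_i)$ (the worst-case numerator degree among the rational summands, using that the $(i,j)$-entry of $\omega\vf\vf^\T$ contributes numerator degree $n_\omega + n_i + n_j$ against denominator degree $d_\omega + d_i + d_j$). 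This maximum is exactly $d_{\den} + (n_\omega - d_\omega + 2\max_i(n_i-d_i))_+$, matching the claim. Cancellations among summands can only lower this degree, so this is a valid upper bound.

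For (ii), I would use the constraint $\pi \in P_d^\mathcal{I}$ from \eqref{eq:POP-constr}, i.e.\ $\pi \geq 0$ on $\mathcal{I}$. This forces every zero of $\pi$ in the interior of a non-degenerate component interval of $\mathcal{I}$ to have \emph{even} multiplicity, because a sign change is impossible. At the endpoints of the $k_2$ non-degenerate intervals and at the $k_1$ singleton components, no such parity restriction applies. Partition the distinct real zeros of $\pi$ lying in $\mathcal{I}$ into three groups, of sizes $s_1, s_2, s_3$, located at singletons, at endpoints of non-degenerate intervals, and strictly inside non-degenerate intervals, respectively. Then $s_1 \leq k_1$, $s_2 \leq 2k_2$, and the total multiplicity constraint gives $s_1 + s_2 + 2s_3 \leq \deg\pi$. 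Consequently
\[
s_1 + s_2 + s_3 \;\leq\; \tfrac{1}{2}\bigl(s_1+s_2+\deg\pi\bigr) \;\leq\; \tfrac{1}{2}\bigl(k_1 + 2k_2 + \deg\pi\bigr),
\]
while trivially $s_1+s_2+s_3 \leq \deg\pi$. Taking the minimum of the two bounds and invoking Theorem \ref{thm:main} (which restricts the support of any, hence also the minimally-supported, $\Phi$-optimal design to these zeros) yields the corollary.

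I expect the only subtle point to be the degree accounting in step (i): checking that the $\lcm$ really does cover the denominators of every entry of $\vM(\xi_t)$, and that the $(\cdot)_+$ correctly handles the case when the constant-in-$t$ contribution $y\cdot q$ dominates the rational-summand contribution. The parity-of-multiplicity step and the ensuing arithmetic are standard.
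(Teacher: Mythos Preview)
Your proposal is correct and follows essentially the same approach as the paper's proof: bound the distinct zeros of $\pi$ by $\deg\pi$ trivially, and by $\tfrac{1}{2}(k_1+2k_2+\deg\pi)$ via the observation that nonnegativity on $\mathcal{I}$ forces interior zeros to have even multiplicity while only the $k_1+2k_2$ boundary points may host simple zeros. Your treatment is in fact more detailed than the paper's, which dispatches the degree formula with a one-line reference to \eqref{eq:def_pi} and leaves the arithmetic $s_1+s_2+s_3 \le \tfrac12(s_1+s_2)+\tfrac12(s_1+s_2+2s_3)$ implicit.
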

\begin{proof}
We need to count the number of distinct zeros of the polynomial $\pi$ in \eqref{eq:final}. On one hand, $\pi$ cannot have more than $\deg \pi$ roots. On the other hand, since $\pi$ is nonnegative over $\mathcal{I}$, each of its zeros must be either an endpoint of an interval constituting $\mathcal{I}$ or a root of multiplicity at least two. Hence the number of distinct zeros of $\pi$ is at most $\tfrac{1}{2}(k_1 + 2k_2 + \deg\pi)$. Finally, the expression for $\deg\pi$ comes directly from \eqref{eq:def_pi}.
\end{proof}

\section{Parameter subsystems, estimability}\label{sec:subsystems}

Often the experimenter is not interested in the entire parameter vector $\theta$, but rather in a subset of them, or more generally in $s\leq m$ specific linear combinations of the parameters: $\vk_j^\T \theta$, $j=1, \dots, s$. Let $\vK$ be the matrix whose columns are $\vk_1, \dots, \vk_s$; so far we have assumed $s=m$ and $\vK=\vI$. An application of this more general setting is polynomial regression, when the experimenter needs to test whether the highest degree terms in the model are indeed non-zero.

It can assumed without loss of generality that $\vK$ has full (column) rank, and to make the problem meaningful, it must be assumed that the parameters $\vK^\T\theta$ are \emph{estimable}, that is, \begin{equation}\label{eq:estimability}\operatorname{range}(\vK) \subseteq \operatorname{range}(\vM),\end{equation}
see for example \cite[Chapter 3]{Pukelsheim-93}. In this setting the matrix $\vM$ is replaced by the information matrix $(\vK^\T \vM^\psinv \vK)^{-1}$, where $\vM^\psinv$ denotes the Moore--Penrose pseudo-inverse of $\vM$. In particular, the optimal design is a probability measure $\hat\xi$ that maximizes the matrix $(\vK^\T \vM^\psinv(\xi) \vK)^{-1}$, or the function \mbox{$\xi \to \Phi\big((\vK^\T \vM^\psinv(\xi) \vK)^{-1}\big)$} for some criterion function $\Phi$ compatible with the L{\"o}wner partial order.

The optimization models for this setting can be developed analogously to the model of the previous section. Since $\Phi$ is assumed to be compatible with the L{\"o}wner partial order, $\max_{\vM\in\Mm} \Phi\big((\vK^\T \vM^\psinv(\xi) \vK)^{-1}\big)$ is equivalent to
\begin{equation}\label{eq:model_with_K_1}
\max \{ \Phi(\vY)\;|\; \vM\in\Mm, (\vK^\T \vM^\psinv \vK)^{-1} \succcurlyeq \vY \succcurlyeq 0\}.
\end{equation}
Note that the optimum does not change if we require $\vY$ to be positive definite, in which case the last two inequalities are equivalent to $\vY^{-1} \succcurlyeq \vK^\T \vM^\psinv \vK$. We shall use now a Schur complement characterization of semidefinite matrices, which is a generalization of the result used in \mbox{Example \ref{ex:A-optimality}}.

\begin{proposition}[\protect{\cite[\mbox{Theorem 1.20}]{Zhang-05}}]\label{prop:SchurPSD_general_form}
The symmetric block matrix $\left(\begin{smallmatrix} \vM & \vK\\ \vK^\T & \vZ \end{smallmatrix}\right)$ is positive semidefinite if and only if
$\vM \succcurlyeq 0$, $\vZ \succcurlyeq \vK^\T \vM^\psinv \vK$, and $\operatorname{range}(\vK) \subseteq \operatorname{range}(\vM)$.
\end{proposition}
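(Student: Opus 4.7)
My plan is to reduce the block matrix to block-diagonal form via a congruence transformation, and then invoke the elementary fact that a symmetric block-diagonal matrix is positive semidefinite if and only if each diagonal block is. The algebraic pivot is the identity $\vM \vM^\psinv \vK = \vK$, which is equivalent to $\operatorname{range}(\vK) \subseteq \operatorname{range}(\vM)$ because $\vM \vM^\psinv$ acts as the orthogonal projector onto $\operatorname{range}(\vM)$; transposing this identity and using the symmetry of $\vM^\psinv$ (inherited from the symmetry of $\vM$) also gives $\vK^\T \vM^\psinv \vM = \vK^\T$.

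For the sufficiency direction, I would introduce the invertible matrix
\[ P = \begin{pmatrix} \vI & -\vM^\psinv \vK \\ 0 & \vI \end{pmatrix} \]
and compute directly that
\[ P^\T \begin{pmatrix} \vM & \vK \\ \vK^\T & \vZ \end{pmatrix} P \;=\; \begin{pmatrix} \vM & 0 \\ 0 & \vZ - \vK^\T \vM^\psinv \vK \end{pmatrix}, \]
where the off-diagonal cancellations are exactly the two identities above. Because $P$ is invertible, the block matrix on the left and the block-diagonal matrix on the right are simultaneously positive semidefinite or not. Under the three standing hypotheses both diagonal blocks on the right are positive semidefinite, so the original block matrix is PSD.

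For necessity, positive semidefiniteness of the block matrix immediately gives $\vM \succcurlyeq 0$ as a principal submatrix. To extract the range condition, I would take any $v \in \operatorname{null}(\vM)$ and evaluate the quadratic form of the block matrix at $(v^\T, 0)^\T$, which returns $v^\T \vM v = 0$; since a PSD matrix can only vanish on its own null space (a direct Cauchy--Schwarz consequence via the square-root factorization), this forces $\vK^\T v = 0$. Hence $\operatorname{null}(\vM) \subseteq \operatorname{null}(\vK^\T)$, which after taking orthogonal complements is the required range condition. With the range condition in hand, the same congruence now runs backwards: the block-diagonal matrix above is a congruence image of a PSD matrix, so is itself PSD, and its lower block is exactly $\vZ - \vK^\T \vM^\psinv \vK \succcurlyeq 0$.

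The main delicacy is the careful handling of the pseudo-inverse when $\vM$ is singular, specifically verifying $\vM \vM^\psinv \vK = \vK$ from the range hypothesis and justifying that a PSD matrix whose quadratic form vanishes at a vector must annihilate that vector. Both facts follow quickly from the spectral decomposition of $\vM$ restricted to $\operatorname{range}(\vM)$, but apart from these, the argument is routine block-matrix algebra.
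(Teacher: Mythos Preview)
Your proof is correct and follows the standard Schur-complement congruence argument. Note, however, that the paper does not actually prove this proposition: it is quoted verbatim as \cite[Theorem 1.20]{Zhang-05} and used as a black box to derive \eqref{eq:model_with_K_3}. So there is no in-paper proof to compare against; what you have written is essentially the textbook proof (and presumably close to what one finds in Zhang's book). The only points worth tightening are the ones you already flagged: that $\vM\vM^\psinv$ is the orthogonal projector onto $\operatorname{range}(\vM)$ when $\vM$ is symmetric, and that for a PSD matrix $A$, $x^\T A x = 0$ forces $Ax=0$ (hence $(v,0)$ lies in the null space of the full block matrix, giving $\vK^\T v = 0$). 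Both are standard and your argument goes through.
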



By this proposition, \eqref{eq:model_with_K_1} is equivalent to
\begin{equation*}
\max \{ \Phi(\vY)\;|\; \vM\in\Mm,\; \left(\begin{smallmatrix} \vM & \vK\\ \vK^\T & \vY^{-1} \end{smallmatrix}\right) \succcurlyeq 0\}.
\end{equation*}
Using Schur complements again, the inversion from the last inequality can be eliminated, and we obtain the following equivalent optimization problem:
\begin{equation}\label{eq:model_with_K_3}
\max \{ \Phi(\vY)\;|\; \vM\in\Mm,\; \vM \succcurlyeq \vK \vY \vK^\T,\; \vY \succcurlyeq 0\}.
\end{equation}

Finally, we can simplify this problem essentially identically to how we obtained \eqref{eq:final} from \eqref{eq:orig}. Doing so we obtain the following.

\begin{theorem}\label{thm:main-with-K}
Consider the linear model \eqref{eq:model} and a matrix $\vK \in \real^{m\times s}$ satisfying $\rank(\vK) = s$ and the estimability condition \eqref{eq:estimability}. Then for every semidefinite representable criterion function $\Phi$ a polynomial $\pi$ whose real zeros contain the support of a $\Phi$-optimal design for the parameter vector $\vK^\T\theta$ is an optimal solution of the following semidefinite program:
\begin{align*}
\mathop{\mathrm{minimize}}_{\substack{y\in\real, \vV \in \Smp, \pi\in\real^d,\\ \vW_1, \dots, \vW_p \in \Skp}}\;\; & y \\
\mathrm{subject \; to}\;\;\;\; & \vK^\T \vV\vK \succcurlyeq \sum_{i=1}^p A_i^*(\vW_i),\\
                      & \sum_{i=1}^p \langle \vW_i, \vB_i \rangle = -1,\quad \sum_{i=1}^p C_i^*(\vW_i) = 0,\\
                      & \pi = \Pi(y, \vV, \vW_1, \dots, \vW_p) \in P^\mathcal{I},
\end{align*}
where $A_i, \vB_i, C_i$ and $\vD_i$ come from \mbox{Definition \ref{def:SD-representable-func}}, and $d$ is the degree of the polynomial
\begin{equation*}
t \to \lcm(\den(\omega), \den(f_1^2), \dots, \den(f_p^2)) \bigg(y - \langle \vV, \vM(\xi_t) \rangle - \sum_{i=1}^p \langle \vW_i, \vD_i\rangle\bigg),
\end{equation*}
whose coefficient vector is denoted by $\Pi(y, \vV, \vW_1, \dots, \vW_p)$.
\end{theorem}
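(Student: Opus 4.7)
The plan is to derive Theorem~\ref{thm:main-with-K} by taking the Lagrangian dual of the primal reformulation \eqref{eq:model_with_K_3}, mirroring the argument the appendix gives for Theorem~\ref{thm:main} but carrying through the extra primal variable $\vY$ and the extra matrix-inequality constraint $\vM(\xi) \succcurlyeq \vK\vY\vK^\T$ introduced by the Schur-complement reduction. First I would substitute the LMI representation of $\Phi$ from Definition~\ref{def:SD-representable-func} into \eqref{eq:model_with_K_3}, turning the primal into the linear problem of maximizing a scalar $z$ over $z \in \real$, $\vu \in \real^l$, $\vY \succcurlyeq 0$, and probability measures $\xi$ on $\mathcal{I}$, subject to $A_i(\vY)+\vB_i z + C_i(\vu)+\vD_i \succcurlyeq 0$ for $i=1,\dots,p$ and $\vM(\xi) \succcurlyeq \vK\vY\vK^\T$. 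Admissibility of $\Phi$ combined with the estimability condition \eqref{eq:estimability}, which by Proposition~\ref{prop:SchurPSD_general_form} produces a $\vY \succ 0$ compatible with some $\vM(\xi)$, supplies a Slater point and hence strong duality.

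Next I would form the Lagrangian with multipliers $\vW_i \succcurlyeq 0$ for the $i$th LMI, $\vV \succcurlyeq 0$ for $\vM(\xi) \succcurlyeq \vK\vY\vK^\T$, an auxiliary $\vU \succcurlyeq 0$ for $\vY \succcurlyeq 0$, and a scalar $y$ for $\int d\xi = 1$, and then compute the partial maxima over the primal variables. Stationarity in $z$ forces $\sum_i \langle \vW_i, \vB_i \rangle = -1$; stationarity in $\vu$ forces $\sum_i C_i^*(\vW_i) = 0$; stationarity in $\vY$ forces $\sum_i A_i^*(\vW_i) - \vK^\T \vV \vK + \vU = 0$, which after eliminating the $\vU \succcurlyeq 0$ slack is the one-sided matrix inequality $\vK^\T \vV \vK \succcurlyeq \sum_i A_i^*(\vW_i)$ of the statement. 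The partial supremum over nonnegative measures $\xi$ is finite exactly when $y - \langle \vV, \vf(t)\vf(t)^\T \omega(t) \rangle = y - \langle \vV, \vM(\xi_t)\rangle \geq 0$ for every $t \in \mathcal{I}$, and with all these stationarity conditions enforced the dual objective reduces to $y + \sum_i \langle \vW_i, \vD_i\rangle$.

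To bring this into the form displayed in Theorem~\ref{thm:main-with-K}, a final linear reparametrization $y' := y + \sum_i \langle \vW_i, \vD_i\rangle$ (a bijection in $y$ because the $\vW_i$ are jointly free) converts the dual objective to $\min y'$ and the positivity condition to $y' - \langle \vV, \vM(\xi_t)\rangle - \sum_i \langle \vW_i, \vD_i\rangle \geq 0$ for every $t \in \mathcal{I}$; renaming $y'$ back to $y$ matches the statement verbatim. Since $\omega$ and each $f_i$ are rational with denominators positive on $\mathcal{I}$, multiplying this nonnegativity condition by $\lcm(\den(\omega), \den(f_1^2), \dots, \den(f_p^2))$ clears denominators and produces a polynomial of degree $d$ in $t$ whose coefficient vector is an affine image $\Pi(y, \vV, \vW_1, \dots, \vW_p)$, and the condition $\pi \in P^\mathcal{I}$ is semidefinite-representable by Lemma~\ref{lem:pospoly-sd-rep}. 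Finally, that the support of a $\Phi$-optimal $\hat\xi$ lies in the real zeros of $\pi$ follows from complementary slackness: primal mass may be assigned only where the dual inequality $\pi(t) \geq 0$ is tight.

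The step I expect to be the main obstacle is the rigorous strong-duality argument, since the primal is an infinite-dimensional convex program over measures $\xi$ and the Slater condition must be verified under the estimability assumption rather than the simpler $\vK = \vI$ hypothesis used for Theorem~\ref{thm:main}. Once strong duality and attainment of the dual minimum are in hand, the rest of the derivation is largely mechanical and parallels the $\vK = \vI$ case in the appendix, with $\sum_i A_i^*(\vW_i)$ replaced throughout by its congruence $\vK^\T \vV \vK$, and with the denominator/polynomial-degree bookkeeping carrying over verbatim.
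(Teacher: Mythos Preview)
Your proposal is correct and follows essentially the same Lagrangian-duality route the paper indicates: start from the primal \eqref{eq:model_with_K_3}, dualize the LMI constraints from Definition~\ref{def:SD-representable-func} with multipliers $\vW_i$ and the constraint $\vM \succcurlyeq \vK\vY\vK^\T$ with the new multiplier $\vV$, and read off the polynomial-nonnegativity constraint exactly as in the proof of Theorem~\ref{thm:main}. The paper's own argument is no more than this; it explicitly says the proof is ``essentially identical'' to that of Theorem~\ref{thm:main}, with $\vV$ arising as the dual variable of $\vM \succcurlyeq \vK\vY\vK^\T$.

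The one cosmetic difference is how the scalar $y$ enters. You dualize the normalization $\int d\xi = 1$ directly, obtaining the dual objective $y + \sum_i \langle \vW_i, \vD_i\rangle$ and then reparametrize $y' := y + \sum_i \langle \vW_i, \vD_i\rangle$ to match the statement. The paper instead keeps $\vM \in \Mm$ as a compact convex constraint set throughout the dualization, and only afterwards introduces $y$ as an epigraph variable for $\sup_{\vM \in \Mm}\bigl(\langle \vV, \vM\rangle + \sum_i \langle \vW_i, \vD_i\rangle\bigr)$; this lands directly on the displayed constraint $y - \langle \vV, \vM(\xi_t)\rangle - \sum_i \langle \vW_i, \vD_i\rangle \geq 0$ without any reparametrization. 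Both are standard and equivalent; the paper's version is marginally cleaner, while yours makes the complementary-slackness step (support contained in the zero set of $\pi$) slightly more transparent because $y$ is already a genuine multiplier. Your caution about verifying Slater under the estimability hypothesis is well placed; the paper does not spell this out either.
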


We omit the rest of the proof as it is essentially identical to that of \mbox{Theorem \ref{thm:main}}, given in the Appendix. The main difference is the appearance of the variable $\vV$, which is the dual variable of the constraint $\vM \succcurlyeq \vK\vY\vK^\T$.


\section{Locally optimal designs for nonlinear models}\label{sec:nonlinearmodels}
In this section we show how to apply \mbox{Theorems \ref{thm:main} and \ref{thm:main-with-K}} to find \emph{locally optimal designs} (with respect to various optimality criteria) for nonlinear rational models (see the definition and its motivation below). We consider the general nonlinear model \begin{equation}\label{eq:model-nonlinear} y(t) = f(t;\theta) + N(0,\sigma(t)), \qquad t \in \mathcal{I}, \end{equation}
where $f$ is a rational function of $(t;\theta)$, $\theta$ is an $m$-vector of unknown parameters. The designs space $\mathcal{I}$ is the union of finitely many closed intervals, as before.

Nonlinear regression models are widely used and researched, but finding optimal designs for nonlinear regression is particularly challenging -- so much so that even numerical solutions to simple two- and three-variable models are highly non-trivial to obtain. (See for example \cite{DBPP-08} for recent results on a number of models used in dose-finding studies, and \cite{LM-09} for pharmacokinetic models.) Nonlinear rational models (where the response variable is a rational function of the explanatory variable and the unknown parameters) and models involving exponential functions and logarithms are particularly well studied. Imhof and Studden \cite{IS-01} considered E-optimal designs for different classes of rational models. More recently, Dette \emph{et al.} \cite{DMP-04} investigated E-optimal designs for a more general family of functions (not only rational functions), under the assumption that some partial derivatives of the model function form a \emph{weak Chebyshev system} \cite{KS-66}. Note that this class of problems is not comparable to the rational models we are considering: the partial derivatives of many non-rational functions satisfy this criterion, but many rational models, for instance, the $E_{max}$ model from \mbox{Example \ref{ex:emax}} below, are outside that class.

Perhaps the most fundamental complication in designing non-sequential experiments for nonlinear models is in the formulation of the problem as a meaningful optimization problem. For a nonlinear regression model \eqref{eq:model-nonlinear} the Fisher information matrix corresponding to the design $\xi$ is
\begin{equation}\label{eq:Fisher-nonlinear}
M(\xi, \theta) = \int_{\mathcal{I}} (\partial f(t,\theta)/\partial \theta) (\partial f(t,\theta)/\partial \theta)^\T \omega(t) d \xi(t).
\end{equation}
It is immediate that (unlike in the linear case) the Fisher information matrix for nonlinear models depends on the parameters whose estimation is the purpose of the experiments we are to design. Hence defining the optimal designs as the optimizers of the $M(\xi, \theta)$ is meaningless. Nevertheless, if the experimenter can guess reasonable values of the parameters, it can be useful to design the experiment that would be optimal if the guessed parameters were correct. Some more advanced design methods, such as sequential designs \cite{FS-80} also build on the same concept, often called \emph{locally optimal designs}. (The same ideas can also be used for the estimation of nonlinear functions of the parameters of a linear model.) Before considering the general case, let us look at a simple example that we shall readily generalize below.

\begin{example}\label{ex:emax}Consider the three-parameter $E_{max}$ model
\begin{equation}\label{eq:Emax-model}y(t) = \theta_0 + \frac{\theta_1 t}{t + \theta_2} + N(0,1),\end{equation}
from the dose-finding study \cite{DBPP-08}. With the notation of \eqref{eq:Fisher-nonlinear},
\[\frac{\partial f(t,\theta)}{\partial \theta} = \big(1, t(t+\theta_2)^{-1}, -\theta_1t(t+\theta_2)^{-2} \big)^\T,\]
so for every fixed value $(\theta_0^*, \theta_1^*, \theta_2^*)$ of $\theta$ the integrand in the Fisher information matrix \eqref{eq:Fisher-nonlinear} can be written as
\begin{equation}\label{eq:integrand_1} \begin{pmatrix}
1 & \frac{t}{t+\theta_2^*} & -\frac{\theta_1^* t}{(t+\theta_2^*)^2}\\
\frac{t}{t+\theta_2^*} & \frac{t^2}{(t+\theta_2^*)^2} & -\frac{\theta_1^* t^2}{(t+\theta_2^*)^3}\\
-\frac{\theta_1^* t}{(t+\theta_2^*)^2} & -\frac{\theta_1^* t^2}{(t+\theta_2^*)^3} & \frac{(\theta_1^* t)^2}{(t+\theta_2^*)^4}
\end{pmatrix},\end{equation}
which is the same information matrix as the information matrix of the parameter vector $(\alpha_0, \alpha_1, \alpha_2)$ for the linear model
\begin{equation}\label{eq:Emax-linear-ugly}y(t) = \alpha_0 + \alpha_1 \frac{t}{t + \theta_2^*} +  \alpha_2\frac{\theta_1^* t}{(t + \theta_2^*)^2} + N(0,1). \end{equation}
Hence, finding locally optimal designs for the $E_{max}$ model \eqref{eq:Emax-model} is equivalent to finding optimal designs for the linear model \eqref{eq:Emax-linear-ugly}, which is a linear model with rational functions, hence \mbox{Theorem \ref{thm:main}} is applicable.

A further simplification is possible: we can find an equivalent polynomial model, and use \mbox{Theorem \ref{thm:main-with-K}} to find optimal designs. It is easy to verify that the matrix \eqref{eq:integrand_1} can also be written as
\[K^\T \begin{pmatrix}1 & \chi & \chi^2\\ \chi & \chi^2 & \chi^3\\ \chi^2&\chi^3&\chi^4\end{pmatrix} K\]
with $\chi = (t+\theta_2^*)^{-1}$ and $K = \left(\begin{smallmatrix}1 & 1 & 0 \\ 0 & -\theta_2^* & -\theta_1^* \\ 0 & 0 & \theta_1^* \theta_2^*\end{smallmatrix}\right)$. Hence, for every fixed $\theta^*$ the Fisher information matrix of the design $\xi$ for model \eqref{eq:Emax-model} is identical to the Fisher information matrix of the design that puts $\xi(t_i)$ mass to the point $\chi_i = (t_i+\theta_2^*)^{-1}$ for the three-parameter linear model
\begin{equation} \label{eq:Emax-linear} y(\chi) = \alpha_0 + \alpha_1\chi +  \alpha_2\chi^2 + N(0,1)\end{equation}
and the parameter vector $K^\T\alpha = (\alpha_0,\alpha_0-\alpha_1 \theta_2^*,\alpha_2 \theta_1^* \theta_2^*-\alpha_1 \theta_1^*)^\T$. Now the problem is reduced to polynomial regression, and \mbox{Theorem \ref{thm:main-with-K}} is applicable.
\end{example}


Generally, for a nonlinear regression model \eqref{eq:model-nonlinear}
with $m$ parameters, the problem of finding a locally optimal design for a given parameter vector $\theta^*$ is equivalent to finding the optimal design for the associated linear model of the form \eqref{eq:model} with $f_i = (\partial f)/(\partial \theta_i)\big|_{\theta = \theta^*}$, $i=1, \dots, m$. If $f$ is a rational function of $(t,\theta)$, then so are its partial derivatives. Hence the equivalent linear model (for every fixed value of $\theta$) is always one with rational functions $f_i$.

The same observation was used in \cite{DMP-04} to derive E-optimal designs for the class of nonlinear regression models where the partial derivatives form a weak Chebyshev system. Now this assumption can be dropped, and other optimality criteria can also be considered.

\section{Optimal designs in other functional spaces}\label{sec:othersystems}

Most of \mbox{Section \ref{sec:design}} applies to every $f_i$ and $\omega$, not only to rational functions; for example, \eqref{eq:dual} is not specific to polynomials or rational functions. As long as the set of constraints \eqref{eq:rat_ineq} can be expressed by finitely many semidefinite constraints (or in any other computationally tractable manner), the same approach works. Examples include the following (we consider only the homoscedastic case for simplicity):
\begin{enumerate}
\item $f_i(t) = \cos(it)$ for every $i\in\mathbb{N}$ and $t$;
\item $f_{2i}(t) = \cos(it)$, $f_{2i+1}(t) = \sin(it)$ for every $i\in\mathbb{N}$, and $t$;
\item $f_i(t) = \exp(it)$ for every $i\in\mathbb{N}$ and $t$.
\end{enumerate}
These three examples, however, do not truly generalize the approach of Section \ref{sec:design}, since they can also be reduced to the polynomial case by an appropriate change of variables. (We omit the details.)

Our estimate on the number of support points is also valid for some functional spaces other than polynomials. The only property of polynomials that we used were that their degree bounds the number of their roots (counted with multiplicity: roots in the interior of the domain have multiplicity two). Hence, bounds similar to the one in \mbox{Corollary \ref{cor:degree_bound}} can be obtained for models where the functions $\{\omega f_i f_j|i,j\}$ form a Chebyshev system.

%

\section{Discussion}

Computing optimal designs for linear models involving rational functions is easy when the design space is finite, hence the key difficulty in obtaining optimal designs for infinite design spaces, such as intervals or unions of intervals, is that the finite support of an optimal design has to be determined. Symbolic or closed form solutions are unavailable for most models, and their scope is often limited by assumptions that are neither technical, nor have any statistical interpretation. In this paper, we have presented a method that does not rely on such assumptions. It is an effective method to determine a polynomial whose zeros contain the support of the optimal designs. The method is applicable to every linear regression problem involving only rational functions; it treats D-, A-, E-, and general $\Phi_p$ optimal designs in a unified manner, and generalizes to the heteroscedastic case if the variance of the noise is a positive rational function. The design space can be an interval or the union of finitely many intervals.

This level of generality is far greater than what appears to be possible by closed-form approaches. It is achieved at the price of providing numerical, rather than symbolic, solutions: the method generates the (numerical) coefficients of the sought polynomial. The main step of the method is the solution of a semidefinite programming problem, which can be done (to high precision) with readily available software in trivial running time. Unlike other iterative methods previously proposed in the literature, including all of those based on coordinate descent, semidefinite programming algorithms have a theoretically guaranteed low running time, and are guaranteed to find the globally optimal design, rather than a local optimum. This is of considerable practical importance, with the potential for instance to impact design software development.

Further study of the optimality conditions of the main optimization model might also yield new theoretical insights.

Through a number of examples we have demonstrated the flexibility of the proposed method, and we also found that the algorithm is robust enough to handle ill-conditioned problems involving high-degree polynomials, and yields solutions in a fraction of a second for problems of practical size.

A corollary of our main theorem is a bound on the size of the support set, and an analogous optimization model for the estimation of parameter subsystems.

Most results of this paper readily generalize to linear models involving certain exponential families rather than rational functions; these include Fourier regression, where the model is a trigonometric polynomial with unknown coefficients.
The method may also be used to find locally optimal designs for nonlinear models. In this area almost no symbolic solutions are available, but model-specific numerical methods are abound. Details are available from the author, and may be subject of a future paper.

A few important questions remain open. The first one is how to extend the results of \mbox{Section \ref{sec:reconstruction}}. Since the optimal solution to the problem \eqref{eq:final} is sensitive to both the representation of the optimality criterion $\Phi$ and also to the basis $\{f_i\}$ of the space of regression functions (meaning that equivalent representations of $\Phi$ and basis transformations lead to different optimal solutions), one may readily conjecture that for every model \eqref{eq:model} and for every admissible optimality criterion one can find an equivalent model (that is, a basis $\{f_i\}$ of the same functional space) and a semidefinite representation \eqref{eq:SDP-rep_def} for $\Phi$ such that the optimal $\pi$ in every solution of \eqref{eq:final} is nonzero.

Another subject of future research may be the generalization of our results to larger classes of functions. Chebyshev systems are natural candidates to look at, but more importantly, the ideas of the paper would generalize word by word to every family $(f_1, \dots, f_m)$ and weight function $\omega$ for which functions in the space $\operatorname{span}\{\omega f_i f_j|i,j\}$ are easy to maximize. Hence, identifying such spaces of functions would be particularly important.

Finally, the ability to design experiments in a discontinuous design space is extremely relevant in practice, especially in the multivariate case (e.g., when measurements cannot be taken at inaccessible locations, or are practically impossible very close to signal sources). Existing models with closed-form solutions are not applicable, and most of the current numerical methods cannot address this problem even in the univariate case, aside from sporadic results involving two disjoint intervals for a few concrete models.

The applicability of the proposed method in the multivariate setting also requires further study.

\singlespacing
\bibliographystyle{amsplain}
\bibliography{design}

\appendix

\section{The semidefinite representability of polynomials over intervals}
For a $\Delta \subseteq \real$ let $\Po_n^\Delta$ denote the set of degree $n$ polynomials nonnegative over $\Delta$. The following representation of nonnegative polynomials is well-known:
\begin{proposition}[\cite{Luk-18}]\label{prop:WSOS-poly-char} For every polynomial $p$ of degree $n$,
$p \in \Po_{n}^{[a,b]}$ if and only if
    \[p(t) = \begin{cases} r^2(t) + (t-a)(b-t) q^2(t) & (\text{if } n=2k) \\ (t-a) r^2(t) + (b-t) s^2(t) & (\text{if } n=2k+1) \end{cases}\]  for some polynomials $r$ and $s$ of degree $k$ and $q$ of degree $k-1$.
\end{proposition}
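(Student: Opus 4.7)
The $\Leftarrow$ direction is immediate: on $[a,b]$ we have $(t-a)\geq 0$ and $(b-t)\geq 0$, and squares of real polynomials are nonnegative, so any $p$ of the stated form is nonnegative on $[a,b]$; a routine degree count confirms $\deg p = n$.

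For the $\Rightarrow$ direction I would argue by induction on $n$. The base cases $n=0,1$ are direct: for $n=0$ a nonnegative constant is trivially the square of a constant (with $q$ or $s$ zero), and for $n=1$ one writes $p(t) = \lambda(t-a) + \mu(b-t)$ by linear interpolation at the endpoints, noting that $p(a),p(b)\geq 0$ forces $\lambda,\mu\geq 0$, which gives the desired form with $r,s$ constants.

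For the inductive step, factor $p$ over $\mathbb{R}$ into irreducible factors and sort them into three groups: (i) squares $(t-\alpha)^{2}$ coming from real roots $\alpha\in(a,b)$ (which must have even multiplicity since $p$ does not change sign on $[a,b]$); (ii) irreducible real quadratics $(t-\gamma)^2+\delta^2$ from complex conjugate pairs; and (iii) linear factors $(t-\beta)$ with $\beta\notin(a,b)$. Factors of type (i) can be pulled out and absorbed into the $r^2$ (or $r^2,s^2$) component, immediately reducing to a lower-degree polynomial. Factors of type (ii) are already explicit sums of squares of real polynomials of degree $1$ and $0$, and can be incorporated by exploiting the Brahmagupta--Fibonacci-type identity
\[ (A_1^2 + (t-a)(b-t) B_1^2)(A_2^2 + (t-a)(b-t) B_2^2) = (A_1 A_2 - (t-a)(b-t) B_1 B_2)^2 + (t-a)(b-t)(A_1 B_2 + A_2 B_1)^2, \]
which shows that the class of polynomials admitting a representation of the even-degree form is closed under multiplication; the analogous identity, using $(t-a)r^2 + (b-t)s^2$, handles products with one factor from the odd-degree family. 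Factors of type (iii) are paired up: two linear factors $(t-\beta_1)(t-\beta_2)$ with $\beta_1,\beta_2 \leq a$ or $\beta_1,\beta_2 \geq b$ combine into a polynomial that is nonnegative on $[a,b]$ and, by a direct computation, admits the even-degree form; a single linear factor $(t-\beta)$ with $\beta\leq a$ equals $(t-a) + (a-\beta)$, which is $(t-a)\cdot 1^2 + (b-t)\cdot 0$ plus a nonnegative constant, handled by the odd-degree form. The parity of $n$ and the parity of the number of type-(iii) factors on each side of $[a,b]$ match automatically because $p\geq 0$ on $[a,b]$ forces a compatibility on the leading sign.

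The main obstacle is bookkeeping rather than conceptual: one must check that each reduction step lowers the degree and lands in the correct parity class, and that the combined degrees of $r,q$ (or $r,s$) match $k$ and $k-1$ (or both $k$) exactly. This is handled cleanly by using the two-identity framework above, showing that the even-degree and odd-degree representations together form a multiplicatively closed cone containing all irreducible building blocks of an arbitrary $p\in\Po_n^{[a,b]}$.
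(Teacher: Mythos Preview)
The paper does not give its own proof of this proposition; it is quoted as a classical result of Luk\'acs (often called the Markov--Luk\'acs theorem) and used without proof. So there is no ``paper's approach'' to compare against, and your outline has to stand on its own.

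Your strategy --- factor $p$ over $\mathbb{R}$ and show that each building block, and hence each product, lies in the desired class --- is the standard route, and the Brahmagupta-type identity you wrote down is correct and does show that the even-degree class $\{A^2 + (t-a)(b-t)B^2\}$ is multiplicatively closed. However, two pieces of the argument are not actually established by what you wrote.

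First, for your type~(ii) factors you only note that an irreducible real quadratic $(t-\gamma)^2+\delta^2$ is a sum of two squares $A^2+B^2$. That is not the even form $A^2+(t-a)(b-t)B^2$, so it cannot be fed into your multiplicative identity as stated. You need to first show that every strictly positive quadratic admits a representation $r^2+(t-a)(b-t)q^2$ with $\deg r=1$ and $q$ constant; this is the $n=2$ base case of the theorem itself and requires a short but genuine computation (matching three coefficients in three unknowns), which you have skipped. The same issue arises for your type~(iii) pairs with both roots on the same side: you assert ``by a direct computation'' that they admit the even form, but this is again the $n=2$ case and is not obvious from what precedes it.

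Second, the ``analogous identity'' you invoke for the odd-degree family is not displayed, and it is less immediate than the even one. Concretely, you need that multiplying an even-form polynomial $A^2+(t-a)(b-t)B^2$ by an odd-form polynomial $(t-a)C^2+(b-t)D^2$ again yields an odd-form polynomial $(t-a)R^2+(b-t)S^2$; expanding the product gives four terms, and collapsing them to two single squares is a nontrivial identity that should be written out (it works, via $R=AC+(b-t)BD$ and $S=AD-(t-a)BC$, or a variant). Without this, the inductive step in the odd case is incomplete. Your handling of unmatched type~(iii) roots and the sign/parity bookkeeping at the end is also too brief to be convincing as written (for instance, a single real root $\beta>b$ and the mixed case $\beta_1<a<b<\beta_2$ are not addressed explicitly), though these can be fixed once the two identities above are nailed down.
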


On the other hand, functions expressible as sums of squares of functions from a given finite dimensional functional space (such as polynomials of a fixed degree) are semidefinite representable; see \cite{nes-00} for a constructive proof of this claim. Applying this construction to part (2) of Proposition \ref{prop:WSOS-poly-char} yields the following.
\begin{proposition}[\cite{nes-00}]\label{prop:WSOS-poly-char-sdp}
Suppose $p$ is a polynomial of degree $n = 2m+1$, $p(t) = \sum_{k=0}^{n} p_k t^k$, and let $a < b$ are real numbers. Then $p \in \Po_{n}^{[a,b]}$ if and only if there exist positive semidefinite matrices
    $\vX = (x_{ij})_{i,j=0}^m$ and $\vY = (y_{ij})_{i,j=0}^m$
    satisfying
    \begin{equation}\label{eq:OddDegreePolyChar}
        p_k = \sum_{i+j=k} (-a x_{ij} + b y_{ij}) + \sum_{i+j=k-1}      (x_{ij}-y_{ij})
    \end{equation}
    for all $k = 0, \dots, 2m+1$.

    Similarly, if $p$ is a polynomial of degree $n = 2m$, then $p \in \Po_{n}^{[a,b]}$ if and only if there exist positive semidefinite matrices $\vX = (x_{ij})_{i,j=0}^m$ and $\vY = (y_{ij})_{i,j=0}^{m-1}$
    satisfying
    \begin{equation}\label{eq:EvenDegreePolyChar}
        p_k = \sum_{i+j=k} (x_{ij} - ab y_{ij}) + \sum_{i+j=k-1} (a+b)y_{ij} - \sum_{i+j=k-2} y_{ij}
    \end{equation}
    for all $k = 0, \dots, 2m$.
\end{proposition}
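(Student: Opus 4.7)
The plan is to combine Proposition~\ref{prop:WSOS-poly-char} with the standard Gram matrix parametrization of sums of squares. The key building block I would use is the elementary lemma that a polynomial $q$ of degree at most $2m$ can be written as a sum of squares of polynomials of degree at most $m$ if and only if there exists a positive semidefinite matrix $\vX = (x_{ij})_{i,j=0}^m$ satisfying $q(t) = \sum_{i,j=0}^m x_{ij} t^{i+j}$. One direction factors $\vX = \sum_k \mathbf{u}_k \mathbf{u}_k^\T$ via an eigendecomposition and sets $r_k(t) = \sum_i u_{k,i} t^i$, so that $q(t) = \sum_k r_k(t)^2$; the converse just observes that each square $r(t)^2$ with coefficient vector $\mathbf{r}$ has Gram matrix $\mathbf{r}\mathbf{r}^\T \succcurlyeq 0$, and sums of squares correspond to sums of such rank-one PSD matrices.

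Given this lemma, both cases of the proposition follow from Proposition~\ref{prop:WSOS-poly-char} by substitution and coefficient matching. For the even degree case $n = 2m$, I would write $r^2(t) = \sum_{i,j=0}^m x_{ij} t^{i+j}$ and $q^2(t) = \sum_{i,j=0}^{m-1} y_{ij} t^{i+j}$, so that
\begin{equation*}
p(t) = \sum_{i,j=0}^m x_{ij} t^{i+j} + \bigl(-ab + (a+b)t - t^2\bigr) \sum_{i,j=0}^{m-1} y_{ij} t^{i+j}.
\end{equation*}
Collecting the coefficient of $t^k$ from each of the four resulting sums reproduces exactly~\eqref{eq:EvenDegreePolyChar}. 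For the odd degree case $n = 2m+1$, I would write both $r^2$ and $s^2$ as Gram forms with $(m+1)\times(m+1)$ positive semidefinite matrices $\vX$ and $\vY$, expand $(t-a)r^2(t) + (b-t)s^2(t)$, and match coefficients of $t^k$ to recover~\eqref{eq:OddDegreePolyChar}.

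No genuine mathematical obstacle is expected, since the two main technical tools---Lukács's representation of univariate polynomials nonnegative on an interval, and the Gram matrix description of the SOS cone---are both classical, and the rest is bookkeeping. The step that demands the most care is simply tracking the ranges of the $(i,j)$ summation indices after multiplication by $(t-a)$, $(b-t)$, or $(t-a)(b-t)$, particularly because $\vX$ and $\vY$ have different sizes in the even case, and the index shifts by $\pm 1$ and $\pm 2$ mean that the sums indexed by $i+j = k-1$ and $i+j = k-2$ are empty at the boundary values of $k$. Both directions of the ``if and only if'' go through in parallel: given a Lukács decomposition of $p$, I would read off $\vX$ and $\vY$ from the Gram representations of $r^2, s^2, q^2$, and conversely, any PSD pair $(\vX, \vY)$ satisfying~\eqref{eq:EvenDegreePolyChar} or~\eqref{eq:OddDegreePolyChar} yields such a decomposition and hence the nonnegativity of $p$ on $[a,b]$.
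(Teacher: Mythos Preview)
Your proposal is correct and follows precisely the approach the paper indicates: the paper's argument consists of the single sentence ``Applying this construction to part (2) of Proposition~\ref{prop:WSOS-poly-char} yields the following,'' where ``this construction'' is the Gram-matrix characterization of sums of squares cited from \cite{nes-00}. You have simply written out that construction explicitly and carried out the coefficient matching, which the paper leaves to the reader.
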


This is rather involved (and the details are only important for the purposes of actual computations), but close inspection reveals that this proposition characterizes $\Po_{n}^{[a,b]}$ as a linear image of the Cartesian product of two semidefinite cones, thus, it proves the semidefinite representability of $\Po_n^{[a,b]}$ in the sense of Definition \ref{def:SD-representable-set}.

Since the intersection of semidefinite representable sets are also semidefinite representable, it follows that $\Po_n^{\mathcal{I}}$ is semidefinite representable for every union of finitely many closed intervals $\mathcal{I}$.

\section{Proof of Theorem \ref{thm:main}}

Consider the problem of finding $\max \{ \Phi(\vM(\xi))\,|\,\xi\in\Xi(\mathcal{I}) \}$, where $\Xi(\mathcal{I})$ is the set of probability measures on $\mathcal{I}$ with finite support, and $\vM$ is the Fisher information matrix defined by \eqref{eq:Fisher}. Considering the Fisher information as the variable, this can be expressed as a finite dimensional optimization problem:
\begin{equation}\label{eq:orig} \max \{ \Phi(\vM)\,|\, \vM \in \Mm \},\quad \text{where } \Mm = \{ \vM(\xi)\,|\,\xi\in\Xi(\mathcal{I}) \}.
\end{equation}

Let $\xi_t$ be the probability measure that assigns all of its mass to \mbox{$t\in\mathcal{I}$}. Because $\mathcal{I}$ is assumed to be compact and the mapping $t \to \vM(\xi_t)$ is continuous, $\{ \vM(\xi_t)\,|\,t\in \mathcal{I} \}$ is compact. Hence, $\Mm = \conv\{ M(\xi_t)\,|\,t\in \mathcal{I} \}$ is a convex compact set, and the optimization problem \eqref{eq:orig} is well-defined: The maximum is finite, and is attained (for every continuous function $\Phi$).

Now let us assume that $\Phi$ is semidefinite representable. Then using the notations of \mbox{Definition \ref{def:SD-representable-func}}, problem \eqref{eq:orig} may be written as follows.
\begin{equation}\label{eq:main}
\begin{split}
\mathop{\text{maximize}}_{z\in\real,\, \vu\in\real^l,\, \vM \in \Mm} & z \\
\text{subject to} \;\; &  A_i(\vM) + \vB_iz + C_i(\vu) + \vD_i \succcurlyeq 0, \quad i = 1, \dots, p,
\end{split}
\end{equation}
where $A_i, \vB_i, C_i$, and $\vD_i$ are the functions and matrices as in \mbox{Definition \ref{def:SD-representable-func}}.

Because $\Skp$ is a closed convex cone, \eqref{eq:main} is equivalent to the following Lagrangian relaxation (in which the dual variable $\vW_i$ is the Lagrange multiplier associated with the $i$th constraint):
\begin{equation}\label{eq:Lagrangian-P}
\max_{z,\vu,\vM\in\Mm}\; \inf_{\substack{\vW_i\succcurlyeq0\\ (i=1,\dots,p)}}\; z+\sum_{i=1}^p \langle \vW_i, A_i(\vM)+\vB_iz+C_i(\vu)+\vD_i\rangle
\end{equation}

Suppose that $\Phi$ is admissible with respect to $\Mm$. Then the optimization problem \eqref{eq:main} has a Slater point, consequently its optimum is equal to optimum of its dual problem \cite[Chapter 4]{Rusz-05}, obtained by replacing the ``$\max \inf$'' by ``$\min \sup$'' in the Lagrangian \eqref{eq:Lagrangian-P}. This dual problem then can be simplified as follows ($C_i^*$ denotes the dual operator of $C_i$):

\begin{align}
& \min_{\vW_1, \dots, \vW_p\succcurlyeq0}\; \sup_{z,\vu,\vM\in\Mm}\; z+\sum_{i=1}^p \langle \vW_i,A_i(\vM)+\vB_iz+C_i(\vu)+\vD_i\rangle \notag \\
& = \min_{\vW_1, \dots, \vW_p\succcurlyeq0}\; \sup_{z,\vu,\vM\in\Mm}\; z \left(1+\sum_{i=1}^p \langle \vW_i, \vB_i\rangle\right) + \notag \\ & \qquad\quad + \sum_{i=1}^p \langle C_i^*(\vW_i), \vu\rangle + \sum_{i=1}^p \langle \vW_i, A_i(\vM) + \vD_i\rangle \notag \\
& = \min_{\substack{\vW_1, \dots, \vW_p\succcurlyeq0\\ \sum_i \langle \vW_i, \vB_i \rangle = -1 \\ \sum_i C_i^*(\vW_i) = 0}}\; \sup_{\vM\in\Mm}\; \sum_{i=1}^p \langle \vW_i, A_i(\vM) + \vD_i\rangle \label{eq:dual-ugly} \\
& = \min_{\substack{\vW_1, \dots, \vW_p\succcurlyeq0\\ \sum_i \langle \vW_i, \vB_i \rangle = -1 \\ \sum_i C_i^*(\vW_i) = 0}}\; \max_{\vM\in\Mm}\; \sum_{i=1}^p \langle \vW_i, A_i(\vM) + \vD_i\rangle.\notag
\end{align}
(The last equation simply means that the supremum is attained.)

Finally, with the help of a dummy variable $y$ the optimization problem in the last line can be conveniently written as:
\begin{equation}\label{eq:dual}\begin{aligned}
\mathop{\text{minimize  }}_{\substack{y\in\real,\\ \vW_1, \dots, \vW_p \in \Sk}} &\;\; y \\
    \text{subject to  } & \;\;\vW_i \succcurlyeq 0 &i=1,\dots,p,\\
                        & \;\;\sum_{i=1}^p \langle \vW_i, \vB_i \rangle = -1, \quad \sum_{i=1}^p C_i^*(\vW_i) = 0,\\
                        & \;\;y \geq \sum_{i=1}^p \langle \vW_i, A_i(\vM)+\vD_i \rangle &\forall\,\vM\in\Mm.
\end{aligned}
\end{equation}

Aside from the last set of constraints, which is an uncountably infinite collection of linear inequalities, every constraint is either a linear equality or a linear matrix inequality on the variables $\vW_i$. Using that $\Mm = \conv \{ \vM(\xi_t)\,|\,t\in \mathcal{I} \}$, the last set of constraints can also be simplified to
\begin{equation}\label{eq:rat_ineq} y - \sum_{i=1}^p \langle \vW_i, A_i(\vM(\xi_t))+\vD_i \rangle \geq 0 \qquad \forall\,t \in \mathcal{I}. \end{equation}
Since $\vM(\xi_t)$ is a matrix whose entries are rational functions of $t$, this inequality  expresses the nonnegativity of a rational function (over $\mathcal{I}$) that lives in the space
\begin{equation*}
V = \operatorname{span}\left(\{ \omega f_i f_j\,|\,i,j=1,\dots,m \} \cup \{1\} \right),
\end{equation*}
with variable coefficients. Multiplying both sides with the least common denominator of the functions $\omega f_i f_j$ (which is positive on $\mathcal{I}$) turns \eqref{eq:rat_ineq} to the equivalent inequality \eqref{eq:POP-constr} with $\Pi$ defined in \eqref{eq:def_pi}, giving us \eqref{eq:final}.

Finally, suppose $(\hat y, \hat\pi, \hat \vW_1, \dots, \hat \vW_p)$ is an optimal solution to \eqref{eq:final}. Then, since \eqref{eq:final}, \eqref{eq:dual-ugly}, and \eqref{eq:dual} are equivalent, $(\hat y, \hat \vW_1, \dots, \hat \vW_p)$ is also an optimal solution to \eqref{eq:dual}, and because the optimum in \eqref{eq:dual-ugly} is attained, there exists an $\hat \vM\in \Mm$ that satisfies the last constraint of \eqref{eq:dual} with inequality. The way we obtained \eqref{eq:dual} from \eqref{eq:orig} ensures that this $\hat \vM$ is also an optimal solution to our original problem \eqref{eq:orig}. Suppose $\hat \vM = \hat \vM(\hat \xi)$ for some measure $\hat \xi \in \Xi(\mathcal{I})$ that is concentrated on $\{t_1, \dots, t_k\}\subseteq \mathcal{I}$ and assigns weight $\lambda_i$ to $t_i$, $i=1, \dots, k$. Then with the optimal $\hat y$ and $\hat \vW_1, \dots \hat \vW_p$ each of these $t_i$ must satisfy \eqref{eq:rat_ineq} with equality. Consequently each $t_i$ is a root of $\hat \pi$. \qed

\end{document}